\newtheorem{defn}{Definition}
\newtheorem{thm}{Theorem}[section]
\newtheorem{cor}[thm]{Corollary}
\newtheorem{prop}{Proposition}
\newtheorem{lem}[thm]{Lemma}
\newtheorem{conj}[thm]{Conjecture}
\newtheorem{constr}[thm]{Construction}
\newtheorem{note}{Remark}
\newcommand{\bit}{\begin{itemize}}
\newcommand{\eit}{\end{itemize}}
\newcommand{\bcor}{\begin{cor}}
\newcommand{\ecor}{\end{cor}}
\newcommand{\beq}{\begin{equation}}
\newcommand{\eeq}{\end{equation}}
\newcommand{\beqn}{\begin{equation*}}
\newcommand{\eeqn}{\end{equation*}}
\newcommand{\bea}{\begin{eqnarray}}
\newcommand{\eea}{\end{eqnarray}}
\newcommand{\bean}{\begin{eqnarray*}}
\newcommand{\eean}{\end{eqnarray*}}
\newcommand{\ben}{\begin{enumerate}}
\newcommand{\een}{\end{enumerate}}
\newcommand{\bdefn}{\begin{defn}}
\newcommand{\edefn}{\end{defn}}
\newcommand{\bnote}{\begin{note}}
\newcommand{\enote}{\end{note}}
\newcommand{\bprop}{\begin{prop}}
\newcommand{\eprop}{\end{prop}}
\newcommand{\blem}{\begin{lem}}
\newcommand{\elem}{\end{lem}}
\newcommand{\bthm}{\begin{thm}}
\newcommand{\ethm}{\end{thm}}
\newcommand{\bconj}{\begin{conj}}
\newcommand{\econj}{\end{conj}}
\newcommand{\bconstr}{\begin{constr}}
\newcommand{\econstr}{\end{constr}}
\newcommand{\bpf}{\begin{proof}}
\newcommand{\epf}{\end{proof}}
\title{The Storage-Repair-Bandwidth Trade-off of Exact Repair Linear Regenerating Codes for the Case $d = k = n-1$}
\author{N. Prakash and M. Nikhil Krishnan  
	\thanks{N. Prakash and M. Nikhil Krishnan are with the Department of ECE, Indian Institute of Science, Bangalore, 560012, India (email: \{prakashn, nikhilkm\}@ece.iisc.ernet.in).}
\thanks{N. Prakash was an intern at NetApp, Bangalore for a part of the duration of this work. } }
\begin{document}


\maketitle

\begin{abstract}
In this paper, we consider the setting of exact repair linear regenerating codes. Under this setting, we derive a new  outer bound on the storage-repair-bandwidth trade-off for the case when $d = k = n -1$, where $(n, k, d)$ are parameters of the regenerating code, with their usual meaning. Taken together with the achievability result of Tian et. al. \cite{BirenPVK_canonical}, we show that the new outer bound derived here completely characterizes the trade-off for the case of exact repair linear regenerating codes, when $d = k = n -1$. The new outer bound is derived by analyzing the dual code of the linear regenerating code.
\end{abstract}

\section{Introduction}\label{sec:intro}

In the regenerating-code framework, a file of size $B$ symbols is encoded into $n\alpha$ symbols and distributed among $n$ nodes in the network, such that each node stores $\alpha$ symbols. These symbols are assumed to be drawn from a finite field $\mathbb{F}_q$. The property of data collection demands that one should be able to recover the entire uncoded file by connecting to any $k$ nodes (see Fig. \ref{fig:intro_regen_framework}) and downloading all the $k\alpha$ coded symbols in them. Further, repair of a single failed node is required to be accomplished by connecting to any $d$ surviving nodes and downloading $\beta \leq \alpha$ symbols from each node. The quantity $d\beta$ is termed as the repair-bandwidth. Two notions of node repair exist, and these are known as functional repair and exact repair.  Under functional repair, the code symbols in the replacement node are such that data collection and node repair properties continue to hold. Under exact repair, the contents of the failed and replacement nodes are identical. 
\begin{figure}[h]
  \centering
\includegraphics[width=5in]{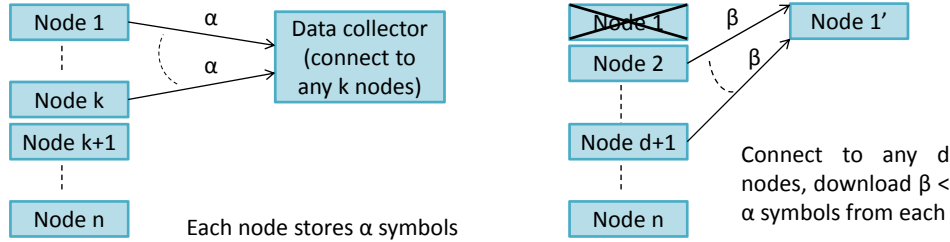}
  \caption{The Regenerating Code Framework.}
  \label{fig:intro_regen_framework}
\end{figure}
A cut-set bound argument based on network-coding was used in \cite{DimGodWuWaiRam} to show that under the framework of functional repair (FR), the file size $B$ is upper bounded by 
\bea \label{eq:intro_cut_set_bd}
B & \leq & \sum_{i=0}^{k-1} \min\{\alpha,(d-i)\beta\}.
\eea
For fixed values of parameters $\{B,k,d\}$, there are multiple pairs $(\alpha,\beta)$ that satisfy \eqref{eq:intro_cut_set_bd} with equality. This leads to the  storage-repair-bandwidth trade-off shown in Fig.~\ref{fig:intro_trade-off} which is piece-wise linear. Existence of FR regenerating codes that can achieve any point on the storage-repair-bandwidth trade-off was also shown in \cite{DimGodWuWaiRam}. The two extremal points on the trade-off curve are termed as the Minimum Storage Regeneration (MSR) and Minimum Bandwidth Regeneration (MBR) points. At the MSR point, the total storage-overhead is as small as possible, while at the MBR point, the repair-bandwidth is the least.  The intermediate points on the curve will be referred to as FR-interior points. 
\begin{figure}[h]
  \centering
\includegraphics[width=3in]{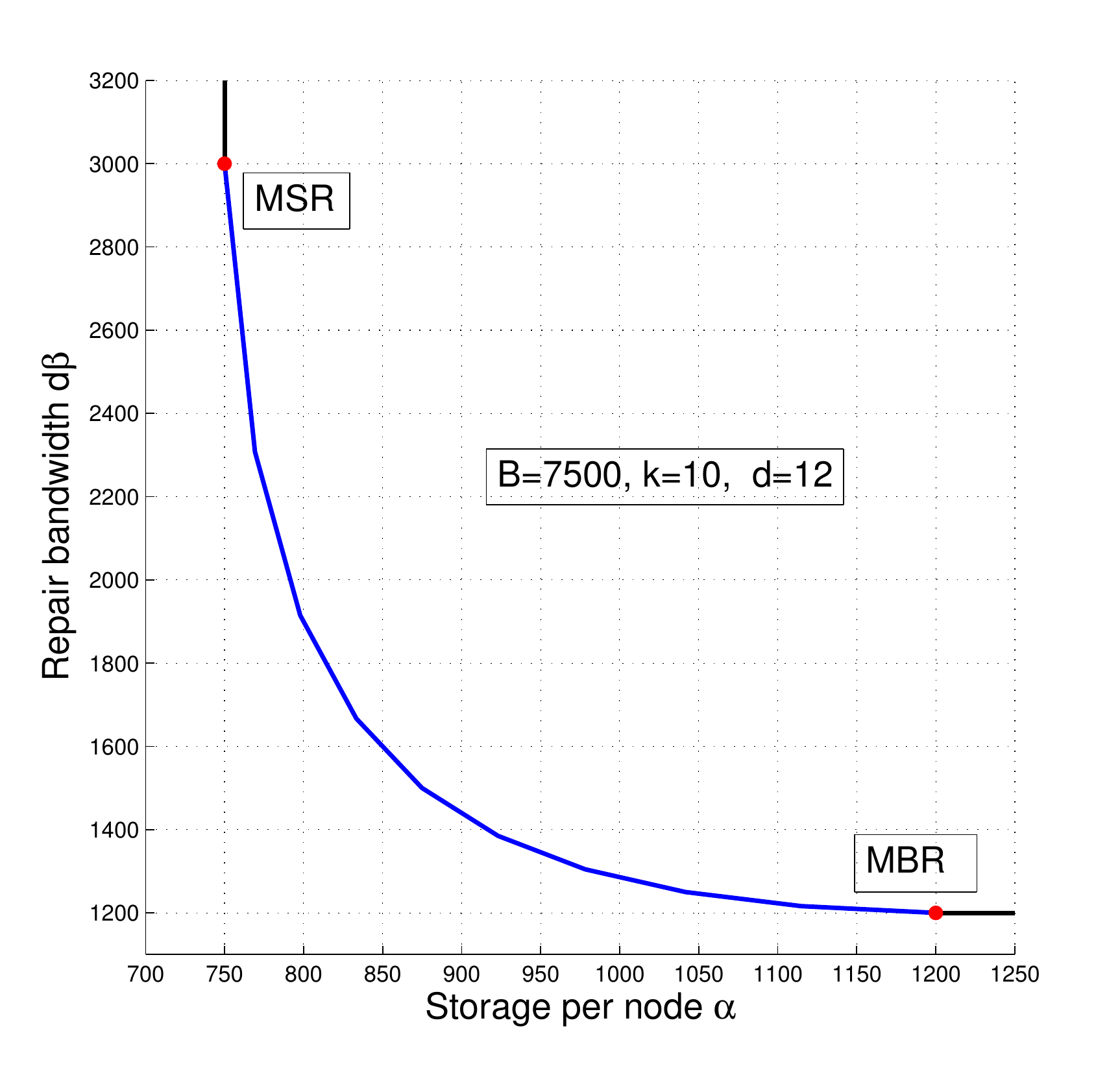}
  \caption{The Storage-Overhead Repair-Bandwidth Trade-off for Regenerating Codes for an example set of parameters.}
  \label{fig:intro_trade-off}
\end{figure}
Several constructions of MSR and MBR codes, having the property of exact repair (ER) exist in literature. Explicit constructions of ER MSR codes for a class of parameters are presented in  \cite{SuhRam,RasShaKum_pm,PapDimCad,ShaRasKumRam_ia,TamWanBru}, whereas the existence of ER MSR codes for all $(n,k,d), \ n > d \geq k$ is shown in \cite{CadJafMalRamSuh}. Explicit ER MBR codes for  all $(n,k,d), \ n > d \geq k$ are presented in \cite{RasShaKum_pm}.  In \cite{ShaRasKumRam_rbt}, a class of ER MBR codes with $d=(n-1)$ is presented, and these codes are termed as repair-by-transfer MBR codes as they enable node repair without need for any operation other than simple data transfer. 

\subsection{The Trade-off for the Case of Exact Repair}

Following results are known in literature regarding the storage-repair-bandwidth trade-off for the case of ER regenerating codes.

\begin{enumerate}[1.]
\item The non-existence of ER regenerating codes which operate on the FR-interior points of the trade-off curve (with the possible exception of the line segment from the MSR point to the next deflection point) was shown in \cite{ShaRasKumRam_rbt}.
\item The trade-off of ER regenerating codes with parameters $(n=4, k=3,d=3)$ was characterized in \cite{Tia}. Except for a region near the MSR point, the interior points on ER trade-off (to be abbreviated as ER-interior points) for the case $(n=4, k=3,d=3)$ lie strictly away from the FR-interior points. 
\item In \cite{SasSenKum}, an  outer bound on the ER trade-off for any general $(n, k, d), n \geq 4$ was derived, which  established that the ER-interior points for any $(n, k, d), n \geq 5$ also lie strictly away from the corresponding FR-interior points (except possibly for a small region near the MSR point). For the case of $(4, 3, 3)$, the bound in \cite{SasSenKum} coincided with the bound in \cite{Tia}. Further, it is also known from the results in \cite{BirenITW} that the bound in \cite{SasSenKum} is optimal when the parameters of the ER regenerating code are given by $(n, k=3, d=n-1)$. However, when $k \geq 4$, the optimality of the outer bound in \cite{SasSenKum} is not known in general. 
\item Two new outer bounds on the trade-off of ER regenerating codes appear in \cite{Duursma}. These are obtained by extending the techniques of \cite{Tia} and \cite{SasSenKum}. The optimality of these bounds is not known, if we exclude the parameters $(n, k=3, d=n-1)$.
\item Constructions of ER regenerating codes which strictly improve upon the  space-sharing region of MBR and MSR codes appear in \cite{BirenPVK_canonical}, \cite{ernvall}, \cite{goparaju}. When $k = d = n-1$, the achievable regions presented in all these three works coincide (see Remark $1$, \cite{goparaju}).
\end{enumerate}

\subsection{Our results}

In this paper, we characterize the storage-repair-bandwidth trade-off of $(n, k=n-1, d=n-1), n \geq 5$ ER linear regenerating codes\footnote{The case  $n=4$ is already solved in \cite{Tia}, and the case $n < 4$ degenerates to trivial cases.}. This is done by deriving a new upper bound on the file size $B$ of ER linear regenerating codes for the case $k=d=n-1, n \geq 4$. The main result of this paper is stated below. 

\vspace{0.1in}

\begin{thm} \label{thm:new_bound_k_eq_d}
Consider an exact repair linear regenerating code, having parameters $(n, k = n-1, d = n-1), (\alpha, \beta), n \geq 4$. Then, the file size $B$ of the code is upper bounded by 
\begin{eqnarray} \label{eq:bound_rank_G}
B & \leq & \left \{ \begin{array}{c} \left \lfloor \frac{r(r-1)n\alpha + n(n-1)\beta}{r^2+r}\right \rfloor, \ \frac{d\beta}{r} \leq \alpha \leq \frac{d\beta}{r-1}, \\ 
 \hspace{1.5in} 2 \leq r \leq n - 2 \\  
(n-2)\alpha + \beta, \ \frac{d\beta}{n-1} \leq \alpha \leq \frac{d\beta}{n-2} \end{array} \right. .
\end{eqnarray}
\end{thm}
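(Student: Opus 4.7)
The plan is to pass to the dual code and exploit its block structure. Write the generator matrix of the code as $G=[G_1\mid G_2\mid\cdots\mid G_n]$ with $G_i\in\mathbb{F}_q^{B\times\alpha}$ the block for node $i$, and let $H=[H_1\mid\cdots\mid H_n]$ be a parity-check matrix, so that $\dim C^\perp=n\alpha-B$. Two structural facts come almost for free. First, data collection from any $n-1$ nodes means the $\alpha$ coordinates of any single node form a correctable erasure pattern, which is equivalent to each $H_i$ having full column rank $\alpha$. Second, linear exact repair of a failed node $i$ from the other $n-1$ helpers yields matrices $S_{j\to i}\in\mathbb{F}_q^{\beta\times\alpha}$ and $M_{ij}\in\mathbb{F}_q^{\alpha\times\beta}$ with $G_i^\top=\sum_{j\ne i}M_{ij}S_{j\to i}G_j^\top$. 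Packaging these into the $\alpha\times n\alpha$ matrix $R_i$ that has $I_\alpha$ in its $i$th block and $-M_{ij}S_{j\to i}$ in block $j\ne i$, one gets $R_i G^\top=0$, so every row of $R_i$ lies in $C^\perp$. Each $R_i$ has row rank exactly $\alpha$ (forced by the $I_\alpha$ in block $i$), and its remaining blocks have rank at most $\beta$.

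Let $W_i\subseteq C^\perp$ be the row span of $R_i$ and $W=W_1+\cdots+W_n$. Since $W\subseteq C^\perp$, we have $B\le n\alpha-\dim W$, and the heart of the proof is a sharp lower bound on $\dim W$. The basic estimate is pairwise: if $v=a^\top R_i=b^\top R_j$ lies in $W_i\cap W_j$, reading off block $i$ forces $a=-S_{i\to j}^\top M_{ji}^\top b$, so $a$ is confined to a space of dimension at most $\beta$, giving $\dim(W_i\cap W_j)\le\beta$ and hence $\dim(W_i+W_j)\ge 2\alpha-\beta$. This already produces the MBR-adjacent bound $B\le(n-2)\alpha+\beta$. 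For the intermediate pieces $\frac{d\beta}{r}\le\alpha\le\frac{d\beta}{r-1}$ with $2\le r\le n-2$, the strategy is to iterate the pairwise inequality inside every $(r+1)$-subset $T\subseteq\{1,\ldots,n\}$ to obtain $\dim\bigl(\sum_{i\in T}W_i\bigr)\ge (r+1)\alpha-\binom{r+1}{2}\beta$, and then average these estimates over all $\binom{n}{r+1}$ subsets $T$ by double counting. After rearrangement this yields
\[
\dim W \;\geq\; n\alpha - \frac{r(r-1)n\alpha + n(n-1)\beta}{r(r+1)},
\]
from which the stated bound follows; the floor appears merely because $B\in\mathbb{Z}$.

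The principal obstacle is the middle step: lifting the pairwise intersection bound to a reliable lower bound on $\dim(\sum_{i\in T}W_i)$ for $|T|\ge 3$, since iterating $\dim(A+B)=\dim A+\dim B-\dim(A\cap B)$ needs control over triple and higher intersections of the $W_i$, not just pairs. This is where the parameter hypothesis $\alpha\le\frac{d\beta}{r-1}$ enters, ensuring that the higher-order intersections cannot be so large as to spoil the $\binom{|T|}{2}\beta$ correction. A secondary worry is verifying that the repair matrices $R_1,\ldots,R_n$ genuinely span the advertised dimensions jointly, i.e., that no hidden compatibility between distinct exact-repair schemes collapses $\dim W$; this should reduce to Step~1 (each $H_i$ has rank $\alpha$) combined with the residual freedom in choosing the $S_{j\to i}$.
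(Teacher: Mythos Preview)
Your setup is exactly what the paper uses: pass to the dual, identify the $n$ row spaces $W_i$ coming from the exact-repair equations (the paper stacks these as the block rows of $H_{\text{repair}}$), and observe $\dim(W_i\cap W_j)\le\beta$.  The incremental argument you sketch also goes through for any subset: reading off block $j$ shows $\dim\bigl((W_{i_1}+\cdots+W_{i_{m-1}})\cap W_{i_m}\bigr)\le (m-1)\beta$, hence $\dim\bigl(\sum_{i\in T}W_i\bigr)\ge |T|\alpha-\binom{|T|}{2}\beta$ for every $T$.

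The gap is the ``averaging'' step.  From $\dim W\ge\dim\bigl(\sum_{i\in T}W_i\bigr)$ for each $T$ you only get $\dim W\ge (r+1)\alpha-\binom{r+1}{2}\beta$; no double-counting over $(r+1)$-subsets can manufacture a larger lower bound on the single quantity $\dim W$ from lower bounds on the subset sums, because rank is submodular and all the relevant inequalities point the wrong way.  Concretely, your chain of reasoning uses only the block-rank constraints $\mathrm{rank}(A_{i,j})\le\beta$ on $H_{\text{repair}}$, and those constraints alone cannot beat the functional-repair bound: applied with $T=[n]$ (and taking positive parts on each increment) your argument yields exactly $\dim W\ge\sum_{j=1}^{n}(\alpha-(j-1)\beta)^+$, which is the cut-set bound, not the exact-repair bound you are after.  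For instance, at $(n,k,d)=(4,3,3)$ with $\alpha=2\beta$ your method gives $\dim W\ge 2\beta$, whereas the theorem requires $\dim W\ge\tfrac{10}{3}\beta$.  The hypothesis $\alpha\le d\beta/(r-1)$ does not rescue this; the subset inequality $\dim(\sum_{i\in T}W_i)\ge (r+1)\alpha-\binom{r+1}{2}\beta$ already holds unconditionally, so there is nothing for that hypothesis to bite on.

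What the paper does differently is not averaging but recursion.  It builds a tower of matrices $H^{(n)},H^{(n-1)},\ldots,H^{(3)}$, where the thick columns of $H^{(t)}$ are bases for the successive intersections $\mathcal S(H^{(t+1)}_j)\cap\mathcal S(H^{(t+1)}|_{\{n-t,\ldots,j-1\}})$, and proves by induction on a depth parameter $s$ that
\[
\rho\bigl(H^{(t)}\bigr)\ \ge\ \frac{2}{(s+1)(s+2)}\Bigl((s+1)\sum_j\rho(A^{(t)}_{j,j})-\sum_{j}\sum_{\ell<j}\rho(A^{(t)}_{j,\ell})\Bigr).
\]
The inductive step compares $\rho(H^{(t)})$ with $\rho(H^{(t-1)})$ and uses a second-order intersection estimate (Lemma~\ref{lem:nkk_intersections}(c)): the off-diagonal block ranks at level $t$ are controlled not just by $\beta$ but by the diagonal ranks at level $t$, which in turn encode how much of $W_j$ already lies in $\sum_{\ell<j}W_\ell$.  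This feedback is what produces the $r$-dependent family of bounds; your proposal, which only ever invokes the first-order estimate $\mathrm{rank}(A_{i,j})\le\beta$, cannot see it.
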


\vspace{0.1in}

The above theorem gives an upper bound on $B$ for the range of $\alpha$ given by $\beta \leq \alpha \leq (n-1)\beta$. Note that when $d=k=n-1$, $\alpha = \beta$ corresponds to the MSR point and $\alpha = (n-1)\beta$ corresponds to the MBR point. In Section \ref{sec:ach}, we will see that the outer bound on storage-repair-bandwidth trade-off corresponding to the bound in Theorem \ref{thm:new_bound_k_eq_d} coincides  with the achievability result provided in \cite{BirenPVK_canonical}. Thus, together with this achievability result, the new outer bound completely characterizes the trade-off of ER linear regenerating codes, for the case $k = d = n-1$, $n \geq 5$.

\vspace{0.1in}

\subsubsection{Illustration of Theorem \ref{thm:new_bound_k_eq_d} for the case of $(5, 4, 4)$ codes}

If we specialize \eqref{eq:bound_rank_G} for the case  $(5, 4, 4)$, we get 
\begin{eqnarray} \label{eq:bound_544_rankG_ours}
B & \leq & \left \{ \begin{array}{c} \left \lfloor \frac{5\alpha + 10\beta}{3} \right \rfloor, \ 2\beta \leq \alpha \leq 4\beta \\   
\left \lfloor \frac{15\alpha + 10\beta}{6} \right \rfloor, \ \frac{4\beta}{3} \leq \alpha \leq 2\beta \\  
3\alpha + \beta, \ \beta \leq \alpha \leq \frac{4\beta}{3} \end{array} \right. .
\end{eqnarray}
In Fig. \ref{fig:544_tradeoff}, we plot the  outer bound on the ``normalized" storage-repair-bandwidth trade-off between $\alpha/B$ and  $\beta/B$ corresponding to \eqref{eq:bound_544_rankG_ours}. The normalization is done with respect to the file size $B$. 
\begin{figure}[ht]
  \centering
\includegraphics[width=3in]{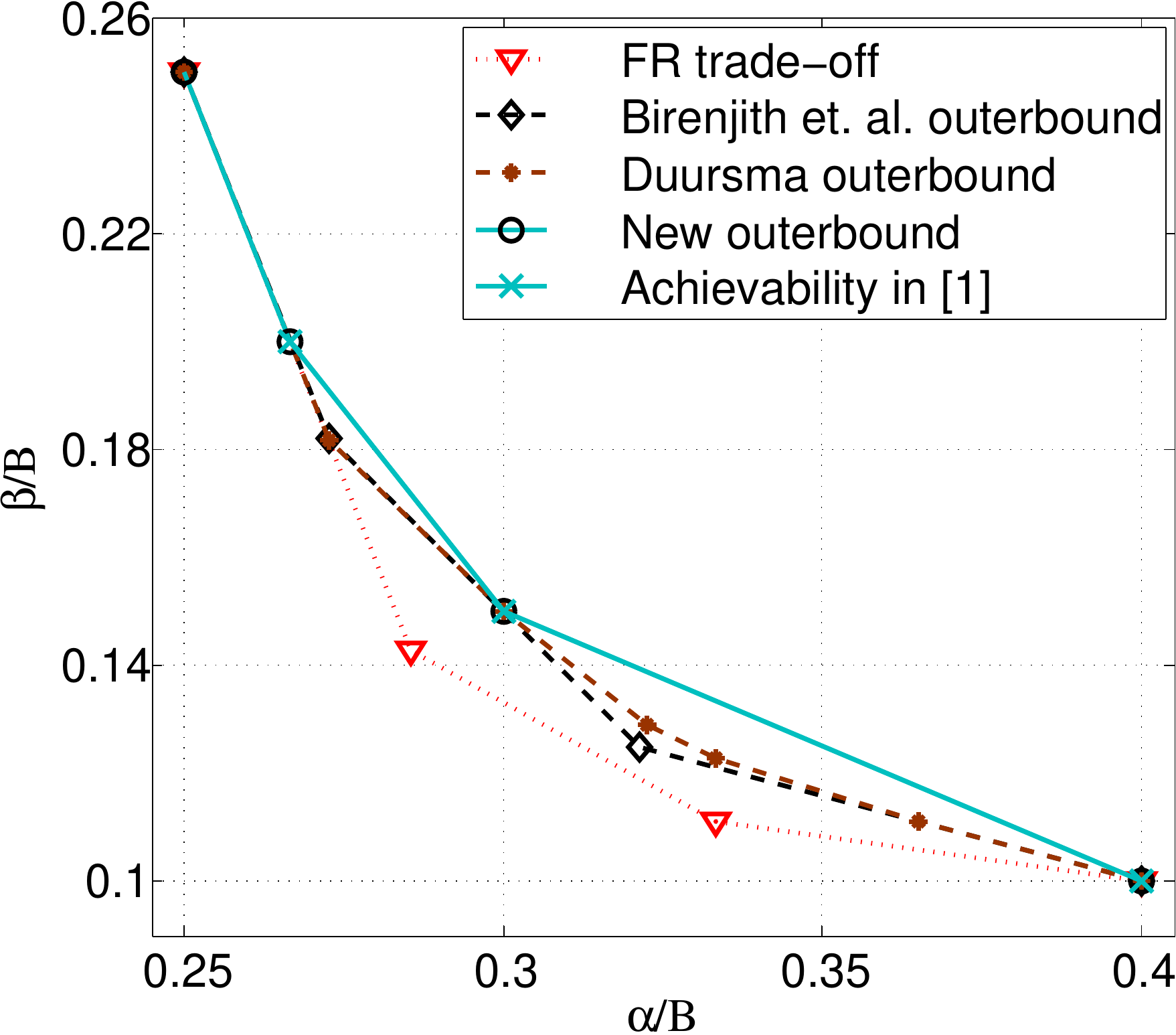}
  \caption{Comparison of outer bounds on the storage-repair-bandwidth trade-off of ER regenerating codes for the case  $(n = 5, k = d = 4)$. The new outer bound plotted in this figure is obtained under the assumption of linear regenerating codes.}
  \label{fig:544_tradeoff}
\end{figure}
In this figure, we have also plotted the following other curves:
\begin{enumerate}[1.]
\item The FR trade-off of $(5, 4, 4)$ regenerating codes.
\item The achievability result from \cite{BirenPVK_canonical} for the parameters $n = 5, d  = k = 4$. We see that our outer bound on the trade-off coincides with this achievability result, and thus establishes the optimality of the new outer bound.
\item The outer bounds on the trade-off obtained in \cite{SasSenKum} and \cite{Duursma} for  $(5, 4, 4)$ ER regenerating codes. We see that the new outer bound is tighter than both these other outer bounds, when the latter bounds are restricted to the case of linear regenerating codes. The expression for the file size bound appearing in \cite{SasSenKum}, when restricted to the case of linear codes, is given by (see Example $2$ of \cite{SasSenKum})
\begin{eqnarray} \label{eq:bound_544_rankG_Biren}
B & \leq & \left \{ \begin{array}{c} \left \lfloor \frac{7\alpha + 22\beta}{5} \right \rfloor, \ \frac{18}{7}\beta \leq \alpha \leq 4\beta \\  \\
\left \lfloor \frac{7\alpha + 6\beta}{3} \right \rfloor, \ \frac{3\beta}{2} \leq \alpha \leq \frac{18}{7}\beta \\  \\
3\alpha + \beta, \ \beta \leq \alpha \leq \frac{3\beta}{2} \end{array} \right.,
\end{eqnarray}
and the bound in \cite{Duursma}, when restricted to the case of linear codes, is given by (see Examples $4.3$ and $5.2$ of \cite{Duursma})
\begin{eqnarray} \label{eq:bound_544_rankG_Duursma}
B & \leq & \left \{ \begin{array}{c} \left \lfloor \frac{7\alpha + 22\beta}{5} \right \rfloor, \ \frac{23}{7}\beta \leq \alpha \leq 4\beta \\  \\
\left \lfloor \frac{21\alpha + 57\beta}{14} \right \rfloor, \ \frac{19}{7}\beta \leq \alpha \leq \frac{23}{7}\beta \\  \\
\left  \lfloor \frac{11\alpha + 19\beta}{6} \right \rfloor, \ \frac{5}{2}\beta \leq \alpha \leq \frac{19}{7}\beta \\  \\
\left  \lfloor \frac{13\alpha + 14\beta}{6} \right \rfloor, \ 2\beta \leq \alpha \leq \frac{5}{2}\beta \\ \\
\left  \lfloor \frac{7\alpha + 6\beta}{3} \right \rfloor, \ \frac{3}{2}\beta \leq \alpha \leq 2\beta \\ \\
3\alpha + \beta, \ \beta \leq \alpha \leq \frac{3}{2}\beta \end{array} \right..
\end{eqnarray}
\end{enumerate}

\subsection{Our Approach, and Some Preliminaries}

We make use of the fact that for the case of linear regenerating codes, maximizing the file size $B$ of the regenerating code is equivalent to minimizing the dimension of dual of the linear regenerating code. More formally, let $\mathcal{C}$ denote an $(n, k, d), (\alpha, \beta)$ ER linear regenerating code, having the (vector-symbol) alphabet $\mathbb{F}_q^{\alpha}$. Also, let the $B \times n\alpha$ matrix $G$ (whose entries are drawn from $\mathbb{F}_q$) denote a generator matrix for $\mathcal{C}$. To be precise, $G$ is the generator matrix for the underlying scalar code (say $\mathcal{C}_s$) of length $n\alpha$, where $\mathcal{C}_s$ is obtained by expanding each vector-symbol of $\mathcal{C}$ into $\alpha$ scalar symbols over $\mathbb{F}_q$.  Next, consider the $(n\alpha - B) \times n\alpha$ matrix $H$ which forms a parity check matrix of $\mathcal{C}_s$, i.e., $H$ generates the dual code of $\mathcal{C}_s$. In this paper, we will simply say that $H$ corresponds to the dual of the regenerating code $\mathcal{C}$, and also loosely identify the dual of the code $\mathcal{C}_s$ as the dual of the regenerating code $\mathcal{C}$ itself. Since the code is linear, we have $B = \text{rank}(G) = n\alpha - \text{rank}(H)$. Our approach in this paper will be to find a lower bound on $\text{rank}(H)$ and then convert it to an upper bound on $\text{rank}(G)$. 

Without loss of generality, we assume that the first $\alpha$ columns of $G$ generate the contents of the first node, the second $\alpha$ columns of $G$ generate the contents of the second node, and so on. The first $\alpha$ columns of  $H$ will together be referred to as the first thick column of $H$; similarly the second thick column and so on. For any set $S \subseteq [n]=\{1, 2, \ldots, n\}$, we will write $H|_S$ to denote restriction of $H$ to the thick columns indexed by the set $S$.

We will make use of the following properties of the matrix $H$ which were established in \cite{Duursma}. 

\vspace{0.1in}

\begin{lem}[Data Collection] \label{lem:H_datacollection}
$\text{Rank}\left({H|_S}\right) = (n-k)\alpha$, for any $S \subseteq [n]$ such that $|S| = n-k$. 
\end{lem}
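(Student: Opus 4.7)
The plan is to dualize the data collection property, exploiting the standard duality between ``any $k$ nodes suffice to decode'' and ``no nonzero codeword is supported on the complementary $n-k$ nodes''.

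First, I would set $T = [n] \setminus S$, so that $|T| = k$. The data collection property, specialized to the $k$-subset $T$, states that the file (message) of length $B$ can be reconstructed from the $k\alpha$ symbols stored in the nodes indexed by $T$. For the linear code $\mathcal{C}_s$ with generator matrix $G$, this is equivalent to the map $m \mapsto (mG)|_T$ being injective on $\mathbb{F}_q^B$, i.e.\ $G|_T$ has rank $B$. Reformulated on the code side, this says that any codeword $c \in \mathcal{C}_s$ that vanishes on $T$ must be the all-zero codeword; equivalently, the only codeword supported on $S$ is $0$.

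Second, I would translate this support property to $H$. By definition, $c \in \mathcal{C}_s$ iff $Hc^{\mathrm{T}} = 0$. If $c$ is supported on $S$, then $Hc^{\mathrm{T}} = H|_S \cdot (c|_S)^{\mathrm{T}}$, since the columns of $H$ indexed by $[n]\setminus S$ contribute zero. Hence codewords of $\mathcal{C}_s$ supported on $S$ are in bijection with vectors in the kernel of $H|_S$ (viewed as a linear map from $\mathbb{F}_q^{(n-k)\alpha}$). By the previous step, this kernel is trivial, so $H|_S$ has full column rank $(n-k)\alpha$.

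Finally, since $H|_S$ has $(n-k)\alpha$ columns and no nontrivial linear dependence among them, $\mathrm{Rank}(H|_S) = (n-k)\alpha$, which is what the lemma claims. There is essentially no obstacle here; the only subtlety is keeping track that we are taking kernels/supports on the \emph{column} side of $H$ (positions of $\mathcal{C}_s$), and observing that the column count of $H|_S$ coincides with the target rank, so ``trivial kernel'' immediately yields the equality rather than merely an inequality. Implicit in the argument is $B \leq k\alpha$, which follows from data collection itself and guarantees that $(n-k)\alpha \leq n\alpha - B$, so that the asserted rank does not exceed the number of rows of $H$.
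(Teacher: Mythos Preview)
Your argument is correct and is precisely the standard duality argument underlying the paper's one-line proof, which simply cites \cite{Duursma} and asserts equivalence with the data collection property. You have supplied the details the paper omits, namely that data collection on $T=[n]\setminus S$ forces every codeword supported on $S$ to be zero, hence $H|_S$ has trivial kernel and full column rank $(n-k)\alpha$.
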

\begin{proof}
This is a re-statement of Part $(1)$ of Proposition $2.1$ of \cite{Duursma}, and is equivalent to the data collection property.
\end{proof}

\vspace{0.1in}

\begin{lem}[Exact Repair] \label{lem:H_repair}
Assume that $d= n-1$. Then, under the assumption of ER, the row space of $H$ contains a collection of
$n\alpha$ vectors which can be arranged as the rows of an $n\alpha \times n\alpha$ matrix $H_{repair}$, as given below:
\begin{eqnarray} \label{eq:Hrepair}
H_{repair} & = & \left[ \begin{array}{c|c|c|c}  I_{\alpha} & A_{1,2} & & A_{1,n} \\
\hline \\
A_{2,1} & I_{\alpha}  & & A_{2, n} \\
\hline \\
& & \ \ \ddots \ \ & \\
\hline \\
A_{n,1} & A_{n,2} &  & I_{\alpha} \end{array}
\right],
\end{eqnarray}
where $I_{\alpha}$ denotes the identity matrix of size $\alpha$ and $A_{i,j}$ denotes an $\alpha \times \alpha$ matrix such that $\text{rank}\left(A_{i, j}\right) \leq \beta, 1 \leq i, j \leq n, i \neq j$.
\end{lem}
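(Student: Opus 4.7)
The plan is to recast the exact repair property as a linear identity among the column-blocks of the generator matrix $G$, and then to read off the rows of $H_{repair}$ directly from this identity. Write $G = [G_1 \mid G_2 \mid \cdots \mid G_n]$, where $G_i$ is the $B \times \alpha$ submatrix generating the contents of node $i$. For the repair of node $i$ from the $d = n-1$ helper nodes in $[n] \setminus \{i\}$, the linearity of the code implies that each helper $j$ transmits $\beta$ linear combinations of its stored symbols, encoded by some $\alpha \times \beta$ matrix $S_{j,i}$, and the collected symbols from all helpers are combined through $\beta \times \alpha$ matrices $T_{j,i}$ to reconstruct node $i$. Since this reconstruction must coincide with the contents of node $i$ for every message vector $u \in \mathbb{F}_q^B$, I obtain the identity
\begin{equation*}
G_i \;=\; \sum_{j \ne i} G_j \, M_{j,i}, \qquad M_{j,i} := S_{j,i} T_{j,i},
\end{equation*}
and in particular each $M_{j,i}$ is an $\alpha \times \alpha$ matrix of rank at most $\beta$.

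Next, I would define $A_{i,j} := -M_{j,i}^{T}$ for $i \ne j$ and let $H_{repair}$ be the matrix constructed as in (5) with these blocks. It then suffices to check that every row of $H_{repair}$ lies in the row space of $H$, i.e. in the dual of the underlying scalar code $\mathcal{C}_s$, which is equivalent to $G\,H_{repair}^{T} = 0$. A direct block computation shows that the $j$-th thick column-block of $G\,H_{repair}^{T}$ is
\begin{equation*}
G_j \cdot I_{\alpha} + \sum_{i \ne j} G_i A_{j,i}^{T} \;=\; G_j - \sum_{i \ne j} G_i M_{i,j},
\end{equation*}
and this vanishes by the repair identity above (applied with $j$ as the failed node and the indices $i$ as the helpers). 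The rank bound $\text{rank}(A_{i,j}) \le \beta$ is immediate from $\text{rank}(A_{i,j}) = \text{rank}(M_{j,i}) \le \beta$.

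The main obstacle is really just careful bookkeeping: one has to track transposes and clearly distinguish the "failed node" index from the "helper node" index when labelling the blocks $M_{j,i}$ versus $A_{i,j}$. Beyond that, the lemma is essentially a linear-algebraic restatement of the definition of exact repair for linear regenerating codes, with the rank-at-most-$\beta$ structure of $A_{i,j}$ coming directly from the fact that each helper node transmits only $\beta$ symbols, which forces the coefficient matrix to factor through $\mathbb{F}_q^{\beta}$.
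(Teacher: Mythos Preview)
Your argument is correct: the repair identity $G_i = \sum_{j\neq i} G_j M_{j,i}$ with $\operatorname{rank}(M_{j,i})\le\beta$ is exactly the linear-algebraic content of exact repair for a linear code, and setting $A_{i,j}=-M_{j,i}^{T}$ makes each thick row of $H_{repair}$ a parity check, so $G H_{repair}^{T}=0$ and the rows of $H_{repair}$ lie in the row space of $H$. The block computation and the indexing are right.

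As for comparison, the paper does not actually prove this lemma in-line; it simply cites Part~(2) of Proposition~2.1 of Duursma and states that the lemma is equivalent to the exact-repair property when $d=n-1$. Your write-up supplies precisely the explicit derivation that the citation stands in for, and it is the standard one: factor the repair map through the $\beta$ symbols sent by each helper to get the rank-$\le\beta$ blocks, then transpose to move from the generator side to the parity-check side. So there is no substantive methodological difference; you have just unpacked what the paper leaves as a reference.
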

\begin{proof}
This follows from Part $(2)$ of Proposition $2.1$ of \cite{Duursma}, and is equivalent to the exact-repair property for the case $d = n-1$.
\end{proof}	

\vspace{0.1in}

\begin{note} \label{rem:dual_properties}
Note from Lemmas \ref{lem:H_datacollection} and \ref{lem:H_repair} that for the case of $d=k=n-1$, the matrix $H_{repair}$ by itself defines an $(n, k=n-1, d=n-1)(\alpha, \beta)$ regenerating code. Since $\text{rank}(H) \geq \text{rank}(H_{repair})$, we will assume that $H = H_{repair}$ while we derive a lower bound on the rank of $H$ for the case of $d=k=n-1$.
\end{note}

\vspace{0.1in}

The technical discussion appearing in the rest of the article is divided as follows:

\begin{enumerate}[1.]
\item In Section \ref{sec:ach}, we quickly review the achievability result from \cite{BirenPVK_canonical}, for the case of $k = d = n-1$. As mentioned before, the optimality of the new bound derived here will follow from this achievability result.
\item In Section \ref{sec:dimakis_via_dual}, we will re-derive  \eqref{eq:intro_cut_set_bd} for the case of ER linear codes, by calculating a simple lower bound on $\text{rank}(H)$.
\item In Sections \ref{sec:433} and \ref{sec:544}, we will refine the proof presented in Section \ref{sec:dimakis_via_dual}, and obtain the proof of Theorem \ref{thm:new_bound_k_eq_d} for the special cases of $(4, 3, 3)$ and $(5,4,4)$ ER linear regenerating codes, respectively. Our proofs for these two special cases will help us illustrate the key ideas that will be involved in the general proof. Note that the trade-off for the case of $(4,3,3)$ (including non-linear codes) has already been solved by Tian et. al. \cite{Tia}.
\item The proof for the general $(n, k = n - 1, d = n-1)$ will be subsequently presented in Section \ref{sec:general}. 
\end{enumerate}

\section{Achievable Region for $(n, k = n-1, d = n-1)$} \label{sec:ach}

In \cite{BirenPVK_canonical}, the authors give a construction of $(n, k=d, d)$ ER linear regenerating codes, and these are termed as canonical regenerating codes. When specialised to the case $d = n-1$, code constructions are obtained for the following points on the normalized storage vs repair-bandwidth plot:
\begin{equation} \label{eq:ach_Biren}
\left(\frac{\alpha}{B} , \frac{\beta}{B} \right)  =  \left(\frac{r}{n(r-1)}, \frac{r}{n(n-1)}\right), 2 \leq r \leq n-1.
\end{equation}
Note that in \eqref{eq:ach_Biren}, if we put $r = 2$, we get the MBR point, and as $r$ increases, points closer  to the MSR point are achieved. It is also proved that the point corresponding to $ r = n-1$ lies on the FR trade-off, on the line-segment whose one end point is the MSR point. An achievable region on the normalized storage vs repair-bandwidth plot, corresponding to \eqref{eq:ach_Biren} is obtained by 1) connecting the adjacent points in \eqref{eq:ach_Biren} by straight line-segments, and 2) drawing a line segment between the MSR point and the point corresponding to $r = n-1$. For example, if we set $n = 5$, the points of deflection on this achievable region are given by (see Fig. \ref{fig:544_tradeoff}.)
\begin{eqnarray}
\left(\frac{\alpha}{B} , \frac{\beta}{B} \right)  & = & \left(\frac{1}{4}, \frac{1}{4}\right)  , \ \text{MSR point} \\
\left(\frac{\alpha}{B} , \frac{\beta}{B} \right)  & = & \left(\frac{4}{15}, \frac{1}{5}\right)  , \ r = 4  \\
\left(\frac{\alpha}{B} , \frac{\beta}{B} \right)  & = & \left(\frac{3}{10}, \frac{3}{20}\right)  , \ r = 3  \\
\left(\frac{\alpha}{B} , \frac{\beta}{B} \right)  & = & \left(\frac{2}{5}, \frac{1}{10}\right)  , \ r = 2, \text{MBR point}.  
\end{eqnarray}

Now, to see that the outer bound on the normalized trade-off induced by the new file-size bound in Theorem \ref{thm:new_bound_k_eq_d} is the same as what is achieved by \cite{BirenPVK_canonical}, we note that 1) the equation of the line-segment obtained by connecting the two points $\left(\frac{r}{n(r-1)}, \frac{r}{n(n-1)}\right)$ and  $\left(\frac{(r+1)}{n((r+1)-1)}, \frac{(r+1)}{n(n-1)}\right)$, $ 2 \leq r \leq n-2$ is given by 
\begin{eqnarray}
r(r-1)n\left(\frac{\alpha}{B}\right) + n(n-1)\left(\frac{\beta}{B}\right) & = & r^2+r,
\end{eqnarray}
and 2) the equation of the line segment obtained by joining the MSR point and the point corresponding to $r=n-1$ is given by $(n-2)\left(\frac{\alpha}{B}\right) + \left(\frac{\beta}{B}\right) =  1$.

\section{A Derivation of \eqref{eq:intro_cut_set_bd} Based on Dual Code} \label{sec:dimakis_via_dual}

In this section, we will present a simple proof of \eqref{eq:intro_cut_set_bd} for ER linear regenerating codes. As we will see, our proof of Theorem \ref{thm:new_bound_k_eq_d}, to be presented later in this document, will be built up on proof of \eqref{eq:intro_cut_set_bd} that is presented here.

As before, we assume that $\mathcal{C}$ denotes an $(n, k, d = n-1) (\alpha, \beta)$ linear regenerating code, and the matrix $H$ generates the dual of $\mathcal{C}$. Also, recall that we use the notation $H|_S$ to denote the restriction of $H$ to the thick columns indexed by the set $S$, where $S \subseteq [n]$. The basic idea of the proof is to get a lower bound on the column rank of the matrix $H$. Towards this, define the quantities $\delta_j, 1 \leq j \leq n$ as follows:
\begin{eqnarray}
\delta_1 & = & \text{rank}(H|_{[1]}), \label{eq:proof_diamkis_aa}\\
\delta_j & = & \text{rank}(H|_{[j]}) - \text{rank}(H|_{[j-1]}), 2 \leq j \leq n, \label{eq:proof_diamkis_0}
\end{eqnarray}
where we have used the notation $[t] = \{1, 2, \ldots, t\}$ for any positive integer $t$.

We claim that
\begin{eqnarray} \label{eq:proof_diamkis_1}
\delta_{j} = \alpha, \ 1 \leq j \leq n-k, 
\end{eqnarray}
and 
\begin{eqnarray} \label{eq:proof_diamkis_2}
\delta_{j} \geq (\alpha - (j-1)\beta)^+, \ n-k + 1 \leq j \leq n, 
\end{eqnarray}
where the quantity $a^+$ denotes $\max(a, 0)$. Here, \eqref{eq:proof_diamkis_1} follows because we know from Lemma \ref{lem:H_datacollection} that any $n-k$ thick  columns of $H$ has rank given by $(n-k)\alpha$. To see why \eqref{eq:proof_diamkis_2} is true, focus on the $j^{\text{th}}$ thick row of $H_{repair}$ (i.e., the rows from $(j-1)\alpha + 1$ to $j\alpha$ of $H_{repair}$) and note that 
\begin{eqnarray}
\delta_j & \geq & \left(\text{rank}(I_{\alpha}) - \sum_{\ell=1}^{j-1}\text{rank}(A_{j,\ell})\right)^+  \label{eq:dimakis_proof_3a} \\
& \geq & \left(\alpha - (j-1)\beta \right)^+,  \ n-k + 1 \leq j \leq n, \label{eq:dimakis_proof_3}
\end{eqnarray}
where \eqref{eq:dimakis_proof_3} follows because, we know from Lemma \ref{lem:H_repair} that $\text{rank}(A_{i,j}) \leq \beta$. 

Now, the (column) rank of the matrix $H$ can be lower bounded as 
\begin{eqnarray}
\text{rank}(H) & = & \sum_{i=j}^{n}\delta_j \\
& \geq & (n-k)\alpha + \sum_{j=n-k+1}^{n}\left(\alpha - (j-1)\beta \right)^+, \label{eq:dimakis_proof_4}
\end{eqnarray}
where \eqref{eq:dimakis_proof_4} follows from \eqref{eq:proof_diamkis_1} and \eqref{eq:proof_diamkis_2}. From this, it follows that the file size $B$ of the code $\mathcal{C}$ can be upper bounded as 
\begin{eqnarray}
B &  =  & n\alpha - \text{rank}(H) \\
& \leq & n\alpha - (n-k)\alpha  - \sum_{j=n-k+1}^{n}\left(\alpha - (j-1)\beta \right)^+ \\
& = & \sum_{j=n-k+1}^{n}\left( \alpha - \left(\alpha - (j-1)\beta \right)^+ \right) \\
& = & \sum_{j=n-k+1}^{n}\min(\alpha, (j-1)\beta) \label{eq:dimakis_proof_5} \\
& = & \sum_{j=0}^{k-1}\min(\alpha, (d-j)\beta), \label{eq:dimakis_proof_6}
\end{eqnarray}
where \eqref{eq:dimakis_proof_5} follows from noting that $\alpha - \left(\alpha - (j-1)\beta \right)^+ = \min(\alpha, (j-1)\beta)$ and \eqref{eq:dimakis_proof_6} follows from our assumption that $d = n-1$.

\section{The Trade-off of $(4, 3, 3)$  ER Linear Regenerating Codes Based on Dual Code} \label{sec:433}

In this section, we will re-derive the trade-off of $(4, 3, 3)$ linear regenerating codes, which was originally obtained by \cite{Tia}. Our proof here will be built on the proof of $(1)$ which was presented in Section \ref{sec:dimakis_via_dual}. We will prove that, when restricted to the case of ER, it is possible to get a lower bound on $\text{rank}(H)$ that is in general tighter than what is given in \eqref{eq:dimakis_proof_4}. The following theorem specialises Theorem \ref{thm:new_bound_k_eq_d} for the case of $(4, 3, 3)$ and states the result (to be proved here) in terms of $\text{rank}(H)$.

\vspace{0.1in}

\begin{thm} \label{thm:433}
Consider an exact repair linear regenerating code $\mathcal{C}$, having parameters $(n=4, k=3, d=3,), (\alpha, \beta)$. Let the matrix $H$ correspond to the dual of the code $\mathcal{C}$. Then, the rank of the matrix $H$ is lower bounded by
\begin{eqnarray} \label{eq:bound_rank_H_433}
\text{rank}(H) & \geq & \left \{ \begin{array}{c} \left \lceil \frac{8\alpha - 6\beta}{3} \right \rceil, \ 1.5\beta \leq \alpha \leq 3\beta \\  \\
2\alpha - \beta, \ \beta \leq \alpha \leq 1.5\beta \end{array} \right. .
\end{eqnarray}
\end{thm}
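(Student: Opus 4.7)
The second case of Theorem \ref{thm:433} (the regime $\beta \leq \alpha \leq 1.5\beta$ with bound $2\alpha - \beta$) is immediate from the basic argument in Section \ref{sec:dimakis_via_dual}. With $(n,k,d)=(4,3,3)$ we have $n-k=1$, so Lemma \ref{lem:H_datacollection} gives $\delta_1 = \alpha$, and Lemma \ref{lem:H_repair} applied to the second thick row of $H_{repair}$ gives $\delta_2 \geq (\alpha - \beta)^+ = \alpha - \beta$. Since $\delta_3, \delta_4 \geq 0$, we obtain $\text{rank}(H) \geq 2\alpha - \beta$ throughout this sub-range, with no new idea needed. The entire effort therefore goes into the first case.

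For $1.5\beta \leq \alpha \leq 3\beta$, the bare bound of Section \ref{sec:dimakis_via_dual}, namely $\alpha + (\alpha-\beta)^+ + (\alpha-2\beta)^+ + (\alpha-3\beta)^+$, is strictly smaller than $\lceil (8\alpha - 6\beta)/3\rceil$ in the interior of the interval, so a sharpening is required. I would retain the skeleton of Section \ref{sec:dimakis_via_dual} (writing $\text{rank}(H)$ as a sum of rank increments associated to an ordering of thick columns) but (i) bound certain increments using more than one thick row of $H_{repair}$ at a time, and (ii) average the resulting inequalities over a suitably chosen family of orderings of the four nodes so as to extract the factor $1/3$ present in $(8\alpha - 6\beta)/3$. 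Concretely, after securing $\delta_1 + \delta_2 \geq 2\alpha - \beta$ from the basic argument, I would sharpen $\delta_3 + \delta_4 = \text{rank}(H) - \text{rank}(H|_{\{1,2\}})$ using the Schur-complement bound on the $2\alpha \times 2\alpha$ diagonal block $\left(\begin{smallmatrix} I & A_{3,4} \\ A_{4,3} & I \end{smallmatrix}\right)$ of $H_{repair}$, which has rank at least $2\alpha - \beta$ because $A_{3,4}A_{4,3}$ has rank at most $\beta$. The same statement holds for each of the three perfect matchings $\{1,2\}\mid\{3,4\}$, $\{1,3\}\mid\{2,4\}$, $\{1,4\}\mid\{2,3\}$ of $[4]$; summing those three inequalities while controlling the overlaps between matchings by rank submodularity together with the diagonal-identity structure of $H_{repair}$ should yield $3\,\text{rank}(H) \geq 8\alpha - 6\beta$. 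The ceiling in the theorem statement then follows immediately from the integrality of $\text{rank}(H)$.

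The main obstacle I anticipate is the overlap-control step in the averaging: I need to rule out the possibility that the three matching-based inequalities collapse back to the basic bound because of large intersections of the column spans $U_i + U_j$ for different pairings. The cleanest handle is the explicit parametrization $U_i = \{(A_{1,i}v, \ldots, v, \ldots, A_{n,i}v) : v \in \mathbb{F}_q^\alpha\}$ (with $v$ in position $i$) of the column space of the $i$-th thick column of $H_{repair}$; any vector lying in two such pairing-sums yields a system of compatibility equations of the form $v_i = A_{i,j}v_j$, whose joint solution space is forced to be small by the fact that each product $A_{i,j}A_{j,i}$ has rank at most $\beta$, and chaining these relations around the $4$-cycle $1 \to 2 \to 3 \to 4 \to 1$ keeps the total intersection below the threshold needed for the averaging. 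A short case split on whether $\alpha - 2\beta$ and $\alpha - 3\beta$ are positive will handle the $(\,\cdot\,)^+$ terms cleanly, and tightness at the endpoints $\alpha = 1.5\beta$ and $\alpha = 3\beta$ serves as a consistency check on the constants in the overall argument.
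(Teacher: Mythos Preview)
Your treatment of the range $\beta\le\alpha\le 1.5\beta$ is correct and matches the paper. The difficulty is entirely in the range $1.5\beta\le\alpha\le 3\beta$, and there your plan has a real gap.

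Write $W_{ij}$ for the column span of thick columns $i$ and $j$ of $H_{\text{repair}}$. Your Schur-complement bound $\dim W_{ij}\ge 2\alpha-\beta$ is not new: it is exactly the two-term basic bound $\delta_1+\delta_2\ge\alpha+(\alpha-\beta)$ from Section~\ref{sec:dimakis_via_dual}, applied to an ordering beginning with $i,j$. More importantly, the matching-averaging step is circular. For each perfect matching $\{i,j\}\mid\{k,l\}$ one has the \emph{identity}
\[
\text{rank}(H)\;=\;\dim W_{ij}+\dim W_{kl}-\dim(W_{ij}\cap W_{kl}),
\]
so summing over the three matchings and rearranging gives
\[
\sum_{\text{matchings}}\dim(W_{ij}\cap W_{kl})\;=\;\sum_{\text{all six pairs}}\dim W_{ij}\;-\;3\,\text{rank}(H).
\]
Thus any upper bound you place on the overlap sum is \emph{equivalent} to a lower bound on $\text{rank}(H)$ phrased purely through the six numbers $\dim W_{ij}$, and those numbers are already pinned down by the basic bound. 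Your proposed cure via the parametrization $U_i=\{(A_{1,i}v,\ldots,v,\ldots,A_{n,i}v)\}$ and $4$-cycle relations does not escape this: if you write out the linear system for a vector in $W_{12}\cap W_{34}$ in the unknowns $(u_1,u_2,u_3,u_4)$, the coefficient matrix is a sign-twist of $H_{\text{repair}}$ itself, and the resulting dimension count reproduces the inclusion--exclusion identity above verbatim. Nothing beyond the pairwise spans is being used, and those alone cannot force $3\,\text{rank}(H)\ge 8\alpha-6\beta$.

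The paper's argument uses a different mechanism. It writes $\delta_j=(\rho(A_{j,j})-\sum_{\ell<j}\rho(A_{j,\ell}))^{+}+\alpha_j$ with slacks $\alpha_j\ge 0$, and then, instead of symmetrizing, constructs an auxiliary matrix $H^{(3)}$ whose $j$-th thick column is a basis of $\mathcal{S}(H^{(4)}_j)\cap\mathcal{S}(H^{(4)}|_{[j-1]})$. Lemma~\ref{lem:433_intersections} expresses the block ranks of $H^{(3)}$ back in terms of those of $H^{(4)}$ and the $\alpha_j$'s; the single inequality $\rho(H^{(4)})\ge\rho(H^{(3)})$ then forces a lower bound on $\sum_j\alpha_j$, which, substituted back into the expression for $\rho(H^{(4)})$, gives Theorem~\ref{thm:433_rankH4} and hence the bound $(8\alpha-6\beta)/3$. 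This intersection-matrix recursion---not averaging over orderings---is the missing idea, and it is exactly what iterates to produce the general $(n,n-1,n-1)$ bound in Section~\ref{sec:general}.
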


\vspace{0.1in}

Before we prove Theorem \ref{thm:433},we note the following points regarding this theorem. 
\begin{enumerate}[1.]
\item For the case of $(4, 3, 3)$, $\alpha = \beta$ corresponds to the MSR point and $\alpha = 3\beta$ corresponds to the MBR point.
\item From \eqref{eq:dimakis_proof_4}, we see that 
\begin{eqnarray} \label{eq:433_proof_1}
\text{rank}(H) & \geq & 2\alpha - \beta, \ \beta \leq \alpha \leq 2\beta. \label{eq:proof_433_en}
\end{eqnarray}
Thus, to prove Theorem \ref{thm:433}, we only need to prove that 
\begin{eqnarray} 
\text{rank}(H) & \geq & \left \lceil \frac{8\alpha - 6\beta}{3} \right \rceil, \ 1.5\beta \leq \alpha \leq 3\beta. 
\end{eqnarray}
In fact, we will simply prove that 
\begin{eqnarray} \label{eq:433_proof_4}
\text{rank}(H)  \geq  \left \lceil \frac{8\alpha - 6\beta}{3} \right \rceil,
\end{eqnarray} 
without bothering about the range of $\alpha$. Note that given \eqref{eq:proof_433_en}, this suffices to prove Theorem \ref{thm:433}.
\item To see that the bound in Theorem \ref{thm:433} is tighter than the bound in \eqref{eq:dimakis_proof_4}, note that
\begin{eqnarray} \label{eq:433_proof_2}
\left \lceil \frac{8\alpha - 6\beta}{3} \right \rceil & > & 2\alpha - \beta,  \ 1.5\beta < \alpha  \leq 2\beta
\end{eqnarray}
and 
\begin{eqnarray} \label{eq:433_proof_3}
\left \lceil \frac{8\alpha - 6\beta}{3} \right \rceil & > & 3\alpha - 3\beta,  \ 2\beta \leq \alpha  < 3\beta,
\end{eqnarray}
where $2\alpha - \beta$ and  $3\alpha - 3\beta$ respectively denote the bounds obtained in \eqref{eq:dimakis_proof_4} for the cases when $\beta \leq \alpha \leq 2\beta$ and $2\beta  \leq \alpha \leq 3\beta$. 
\end{enumerate}

\vspace{0.1in}

A comparison of the bounds in  \eqref{eq:dimakis_proof_4} and \eqref{eq:bound_rank_H_433} is shown in Fig. \ref{fig:433_rankH_comparison}. Here, we plot the two lower bounds on $\text{rank}(H)$ as a function of $\alpha$, when $\beta$ is fixed as $48$.
\begin{figure}[h]
  \centering
\includegraphics[width=3in]{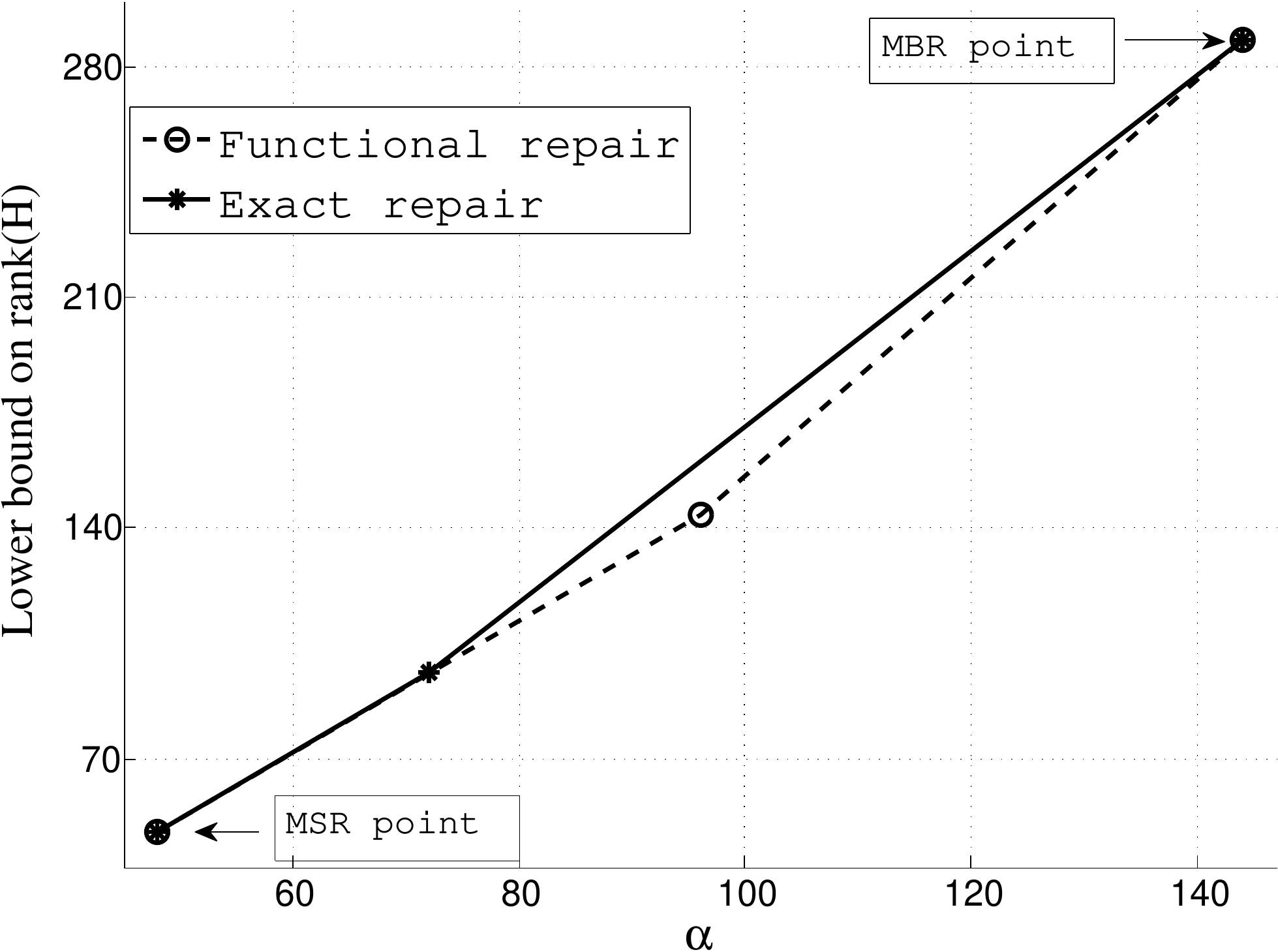}
  \caption{Comparison of the lower bounds on $\text{rank}(H)$ as function of $\alpha$, for the case of $(n = 4, k = 3, d = 3)$ with $\beta = 48$. The dashed and the solid lines correspond to the cases of functional and exact repairs, respectively. See  \eqref{eq:dimakis_proof_4} and \eqref{eq:bound_rank_H_433} for the corresponding equations.}
  \label{fig:433_rankH_comparison}
\end{figure}

\vspace{0.1in}

\subsection{Proof of Theorem \ref{thm:433}} \label{sec:proof_thm_433}

We begin with certain notation needed to prove Theorem \ref{thm:433}. For any matrix $B$ (over $\mathbb{F}_q$), we will use $\mathcal{S}(B)$ to denote the column space of $B$. We will write $\rho(B)$ to mean $\text{rank}(B)$, which is also the same as the dimension of the space $\mathcal{S}(B)$. Next, define $H^{(4)} = H_{repair}$. Also let the matrix $H^{(4)}_j$ denote the $j^{\text{th}}$ thick column of $H^{(4)}, 1 \leq j \leq 4$, i.e., $H^{(4)} = [H^{(4)}_1 \ H^{(4)}_2  H^{(4)}_3 \ H^{(4)}_4 ]$. Next, define the matrices $H^{(3)}_j,  2 \leq j \leq 4$, such that the columns of $H^{(3)}_j$ form a basis for the vector space $\mathcal{S}\left(H^{(4)}_j\right) \cap \mathcal{S}\left(H^{(4)}|_{[j-1]}\right)$. Also define the matrix $H^{(3)}$ as
\begin{eqnarray}
H^{(3)} = [H^{(3)}_2  H^{(3)}_3 \ H^{(3)}_4 ], 
\end{eqnarray}
i.e., $H^{(3)}$ is obtained by stacking the columns of $H^{(3)}_j, 2 \leq j \leq 4$. Notice that the first thick column of $H^{(3)}$ is denoted as $H^{(3)}_2$ instead of $H^{(3)}_1$ (and so on). This has been done intentionally for notational convenience.

The basic idea of our proof for the case of $(4, 3, 3)$ comes from the observation that $\rho(H^{(4)}) \geq \rho(H^{(3)})$. We will show here that \eqref{eq:433_proof_4} is a necessary condition for this to be true. Towards this, we will firstly establish some more notation needed for the proof. We will then separately compute (or bound) the ranks of the two matrices $H^{(4)}$ and $H^{(3)}$. Finally, we will show that the comparison of the two ranks yields the desired bound.

\vspace{0.1in}

\subsubsection{Some Additional Notation }

For any two subspaces $W_1$ and $W_2$, we write $W_1 \subseteq W_2$ to mean that $W_1$ is a subspace of $W_2$. Equation \eqref{eq:Hrepair} will be denoted as $H^{(4)}  = \left( A^{(4)}_{i,j} , 1 \leq i, j \leq 4 \ \right )$, where $A^{(4)}_{i,i} = I_{\alpha}, 1 \leq i \leq 4$, and when $i \neq j$, we have added a superscript on $A_{i,j}$. Note that in this notation, the $j^{\text{th}}$ thick column of $H^{(4)}$ is given by $H^{(4)}_j = \left( A^{(4)}_{i,j} , 1 \leq  i \leq 4 \ \right )$. Also note that 
\begin{eqnarray} \label{eq:433_proof_0a}
\rho\left(H^{(4)}_j\right) & = & \rho\left( A^{(4)}_{j,j}\right),  \ 1 \leq j \leq 4.
\end{eqnarray}

In terms of block sub-matrices, the matrix $H^{(3)}$ will be identified as $H^{(3)}  = \left( A^{(3)}_{i,j} , 1 \leq i \leq 4, 2 \leq j \leq 4 \ \right )$, where  $A^{(3)}_{i,j}$ is an $\alpha \times \rho(H^{(3)}_j)$ matrix (over $\mathbb{F}_q$) such that 
\begin{eqnarray}
\mathcal{S}\left(A^{(3)}_{i,j} \right) & \subseteq & \mathcal{S}\left(A^{(4)}_{i,j} \right) \ \bigcap \ \sum_{\ell=1}^{j-1}\mathcal{S}\left(A^{(4)}_{i,\ell}\right). \label{eq:433_proof_0b}
\end{eqnarray}
Note that \eqref{eq:433_proof_0b} is a direct consequence of the definition of the matrix $H^{(3)}$. Here, we would like to clarify that \eqref{eq:433_proof_0b} is not equivalent to the definition of $H^{(3)}$. The definition of $H^{(3)}$ demands additional restrictions on the matrices $\{A_{i,j}^{(3)}\}$ so that the columns of $H^{(3)}_j$ form a basis for the vector space $\mathcal{S}\left(H^{(4)}_j\right) \cap \mathcal{S}\left(H^{(4)}|_{[j-1]}\right), 2 \leq j \leq 4$.

Next, observe that the $j^{\text{th}}$ thick column of $H^{(3)}$ can be written in terms of the block sub-matrices as $H^{(3)}_j = \left( A^{(3)}_{i,j}, 1 \leq  i \leq 4 \ \right), 2 \leq j \leq 4$. Also, note that 
\begin{eqnarray}
\rho\left(H^{(3)}_j\right) & = & \rho\left( A^{(3)}_{j,j}\right), \ 2 \leq j \leq 4.\label{eq:433_proof_0}
\end{eqnarray}
To see why \eqref{eq:433_proof_0} is true, firstly observe that the $\rho\left(H^{(3)}_j\right)$ columns of $H^{(3)}_j$ can be extended to a basis for $\mathcal{S}\left(H^{(4)}_j\right)$ by adding exactly $\rho\left(H^{(4)}_j\right) - \rho\left(H^{(3)}_j\right)$ additional columns. This implies that the $\rho\left(H^{(3)}_j\right)$ columns of $A^{(3)}_{j,j}$ can be extended to a basis for $\mathcal{S}\left(A^{(4)}_{j,j}\right)$ by adding at most $\rho\left(H^{(4)}_j\right) - \rho\left(H^{(3)}_j\right)$ additional columns. But then, we know from \eqref{eq:433_proof_0a} that $\rho\left(H^{(4)}_j\right)  =  \rho\left( A^{(4)}_{j,j}\right)$. Hence, it must indeed be true that $\rho\left(H^{(3)}_j\right) =  \rho\left( A^{(3)}_{j,j}\right), \ 2 \leq j \leq 4$.

\vspace{0.1in}

\subsubsection{$\text{Rank}(H^{(4)})$}

Let us define the quantities $\delta_j, 1 \leq j \leq 4$ in the same manner as we did in \eqref{eq:proof_diamkis_aa} and \eqref{eq:proof_diamkis_0} for the proof of the FR-trade-off, i.e., 
\begin{eqnarray}
\delta_1 & = & \rho\left(H^{(4)}_1\right) \ = \ \rho\left(A^{(4)}_{1,1}\right), \label{eq:433_proof_5z}\\
\delta_j & = & \rho\left(H^{(4)}|_{[j]}\right) - \rho\left(H^{(4)}|_{[j-1]}\right), \ 2 \leq j \leq 4, \label{eq:433_proof_5}.
\end{eqnarray}
From the discussion in Section \ref{sec:dimakis_via_dual} (see \eqref{eq:dimakis_proof_3a}), we know that 
\begin{eqnarray}
\delta_j & \geq & \left(\rho\left(A_{j,j}^{(4)}\right) - \sum_{\ell=1}^{j-1}\rho\left(A_{j,\ell}^{(4)}\right)\right)^+, \ 2 \leq j \leq 4. \label{eq:433_proof_5a}
\end{eqnarray}

Thus, let us assume that 
\begin{eqnarray}
\delta_j & = & \left(\rho\left(A_{j,j}^{(4)}\right) - \sum_{\ell=1}^{j-1}\rho\left(A_{j,\ell}^{(4)}\right)\right)^+  +   \alpha_j, \ 2 \leq j \leq 4 \label{eq:433_proof_6},
\end{eqnarray}
where $\{\alpha_j, 2 \leq j \leq 4\}$ are non-negative integers. The rank of the matrix $H^{(4)}$ can now be written as
\begin{eqnarray}
\rho\left( H^{(4)} \right) & = & \sum_{j=1}^{4}\delta_j \\
& = & \rho\left(A^{(4)}_{1,1}\right) \ + \  \sum_{j=2}^{4} \left\{ \left(\rho\left(A_{j,j}^{(4)}\right) - \sum_{\ell=1}^{j-1}\rho\left(A_{j,\ell}^{(4)}\right)\right)^+  +   \alpha_j \right\} \label{eq:433_proof_7}.
\end{eqnarray}

\vspace{0.1in}

\begin{note}
Note that in \eqref{eq:433_proof_5a}, we could have written the bound on $\delta_j$ directly in terms of $\alpha, \beta$, like we did in \eqref{eq:dimakis_proof_3}. The reason for not doing so at this stage of the proof, and keeping the bound on $\delta_j$ only in terms of ranks of $\{A_{i,j}^{(4)}\}$ will become evident when we discuss the case of $(5, 4, 4)$ regenerating codes in Section \ref{sec:544}. See Remark \ref{rem:resuse433} as well.
\end{note}

\vspace{0.1in}

\subsubsection{$\text{Rank}(H^{(3)})$}

The following lemma obtains a relation between the ranks of the matrices $\{A^{(3)}_{i,j}\}$ and the ranks of the matrices $\{A^{(4)}_{i,j}\}$. This result will be used to obtain a lower bound on the rank of $H^{(3)}$.

\vspace{0.1in}

\begin{lem} \label{lem:433_intersections}
\begin{enumerate}[a)]
\item 
\begin{eqnarray}
\rho\left(A^{(3)}_{j,j} \right) & = & \rho\left(A^{(4)}_{j,j} \right) \ - \ \delta_j \label{eq:433_proof_8} \\
& = & \rho\left(A^{(4)}_{j,j} \right) \ - \ \left\{ \left(\rho\left(A_{j,j}^{(4)}\right) - \sum_{\ell=1}^{j-1}\rho\left(A_{j,\ell}^{(4)}\right)\right)^+  +   \alpha_j\right\}, \ 2 \leq j \leq 4. \label{eq:433_proof_9}
\end{eqnarray}
\item   
\begin{eqnarray}
\sum_{\ell=2}^{j-1} \rho\left(A^{(3)}_{j,\ell} \right) & \leq & \sum_{\ell=1}^{j-1}\rho\left(A^{(4)}_{j,\ell} \right) \ - \ \rho\left(A^{(3)}_{j,j} \right), \ 3 \leq j \leq 4. \label{eq:433_proof_10}
\end{eqnarray}
\end{enumerate}
\end{lem}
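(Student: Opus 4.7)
The plan is to handle the two parts separately: part (a) is a quick application of the subspace dimension formula, while part (b) requires a telescoping dimension count built on the inclusion already recorded in \eqref{eq:433_proof_0b}.

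For part (a), I would apply the inclusion--exclusion identity $\dim(U \cap V) = \dim(U) + \dim(V) - \dim(U + V)$ with $U = \mathcal{S}(H^{(4)}_j)$ and $V = \mathcal{S}(H^{(4)}|_{[j-1]})$. Since $U + V = \mathcal{S}(H^{(4)}|_{[j]})$ and the columns of $H^{(3)}_j$ form a basis of $U \cap V$ by definition, this yields $\rho(H^{(3)}_j) = \rho(H^{(4)}_j) + \rho(H^{(4)}|_{[j-1]}) - \rho(H^{(4)}|_{[j]}) = \rho(H^{(4)}_j) - \delta_j$. Combined with the identities $\rho(H^{(3)}_j) = \rho(A^{(3)}_{j,j})$ and $\rho(H^{(4)}_j) = \rho(A^{(4)}_{j,j})$ from \eqref{eq:433_proof_0} and \eqref{eq:433_proof_0a}, this gives \eqref{eq:433_proof_8}; substituting the expression for $\delta_j$ from \eqref{eq:433_proof_6} then gives \eqref{eq:433_proof_9}.

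For part (b), the key tool is the containment \eqref{eq:433_proof_0b}, applied to the $j$-th thick-row entries $A^{(3)}_{j,\ell}$. Setting $d_\ell := \dim\bigl(\sum_{m=1}^{\ell} \mathcal{S}(A^{(4)}_{j,m})\bigr)$, the standard subspace dimension formula applied to the intersection appearing on the right-hand side of \eqref{eq:433_proof_0b} gives, for each $\ell \in \{2,\ldots,j-1\}$,
\[
\rho(A^{(3)}_{j,\ell}) \ \leq \ \dim\bigl(\mathcal{S}(A^{(4)}_{j,\ell}) \cap {\textstyle\sum_{m=1}^{\ell-1}} \mathcal{S}(A^{(4)}_{j,m})\bigr) \ = \ \rho(A^{(4)}_{j,\ell}) + d_{\ell-1} - d_\ell .
\]
Summing over $\ell = 2, \ldots, j-1$ and telescoping the $d_\ell$ terms, together with $d_1 = \rho(A^{(4)}_{j,1})$, produces
\[
\sum_{\ell=2}^{j-1} \rho(A^{(3)}_{j,\ell}) \ \leq \ \sum_{\ell=1}^{j-1} \rho(A^{(4)}_{j,\ell}) \ - \ d_{j-1}.
\]

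The final step is to absorb $d_{j-1}$ using the $\ell = j$ case of \eqref{eq:433_proof_0b}: since $\mathcal{S}(A^{(3)}_{j,j}) \subseteq \sum_{m=1}^{j-1} \mathcal{S}(A^{(4)}_{j,m})$, we have $\rho(A^{(3)}_{j,j}) \leq d_{j-1}$, so replacing $d_{j-1}$ by its lower bound $\rho(A^{(3)}_{j,j})$ on the right-hand side yields \eqref{eq:433_proof_10}. I expect the main obstacle to be noticing that \eqref{eq:433_proof_0b} should be exploited jointly (as an intersection of two subspaces) rather than via either inclusion in isolation: the naive bound $\rho(A^{(3)}_{j,\ell}) \leq \rho(A^{(4)}_{j,\ell})$ is strictly weaker and does not leave room to accommodate the $-\rho(A^{(3)}_{j,j})$ term, whereas the telescoping of the $d_\ell$'s creates exactly that slack.
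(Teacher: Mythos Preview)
Your proposal is correct and follows essentially the same approach as the paper. Part~(a) is identical in substance; for part~(b) the paper starts from $\rho(A^{(3)}_{j,j}) \le d_{j-1}$ and peels off one summand of $d_{j-1}$ at a time (each step using \eqref{eq:433_proof_0b} to lower-bound the intersection by $\rho(A^{(3)}_{j,\ell})$), whereas you bound each $\rho(A^{(3)}_{j,\ell})$ directly and telescope the $d_\ell$'s---but the ingredients and the resulting inequality are the same.
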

\begin{proof}
Let us prove \eqref{eq:433_proof_8} first. Using the definition of $\delta_j$ from \eqref{eq:433_proof_5}, we get
\begin{eqnarray}
\delta_j & = & \rho\left(H^{(4)}|_{[j]}\right) - \rho\left(H^{(4)}|_{[j-1]}\right) \\
& = & \text{dim}\left(\mathcal{S}\left(H^{(4)}|_{[j-1]} \right) + \mathcal{S}\left(H^{(4)}_j \right)\right) -  \text{dim}\left(\mathcal{S}\left(H^{(4)}|_{[j-1]}\right)\right) \label{eq:433_proof_11} \\ 
& = & \text{dim}\left(\mathcal{S}\left(H^{(4)}_j \right)\right) - \text{dim}\left(\mathcal{S}\left(H^{(4)}|_{[j-1]} \right) \cap \mathcal{S}\left(H^{(4)}_j \right)\right) \label{eq:433_proof_12} \\
& = & \rho\left(H^{(4)}_j \right) - \rho\left(H^{(3)}_j \right) \label{eq:433_proof_13} \\
& = & \rho\left(A^{(4)}_{j,j} \right) - \rho\left(A^{(3)}_{j,j} \right), \label{eq:433_proof_14}
\end{eqnarray} 
where in \eqref{eq:433_proof_12} we have used the fact that for any two subspaces $W_1, W_2$, $\text{dim}(W_1 + W_2)  = \text{dim}(W_1) + \text{dim}(W_2) - \text{dim}(W_1 \cap W_2)$. Equation  \eqref{eq:433_proof_13} follows from the definition of $H^{(3)}_j$, while \eqref{eq:433_proof_14} follows from \eqref{eq:433_proof_0a} and \eqref{eq:433_proof_0}. This completes the proof of \eqref{eq:433_proof_8}. Equation \eqref{eq:433_proof_9} now follows directly from \eqref{eq:433_proof_6}.

We will next prove \eqref{eq:433_proof_10} which is the second claim in the lemma. Towards this, observe from \eqref{eq:433_proof_0b} that we have $\mathcal{S}\left(A^{(3)}_{j,j}\right) \subseteq  \sum_{\ell = 1}^{j-1}\mathcal{S}\left(A^{(4)}_{j,\ell}\right)$ and thus, we get that
\begin{eqnarray}
\rho \left(A^{(3)}_{j,j}\right) & \leq &  \text{dim} \left( \sum_{\ell = 1}^{j-1}\mathcal{S}\left(A^{(4)}_{j,\ell}\right) \right). \label{eq:433_proof_15}
\end{eqnarray}
The right hand side of \eqref{eq:433_proof_15} can be upper bounded as follows:
\begin{eqnarray}
\text{dim} \left( \sum_{\ell = 1}^{j-1}\mathcal{S}\left(A^{(4)}_{j,\ell}\right) \right) & = & \text{dim} \left( \sum_{\ell = 1}^{j-2}\mathcal{S}\left(A^{(4)}_{j,\ell}\right)  + \mathcal{S}\left(A^{(4)}_{j,j-1}\right)\right) \label{eq:433_proof_16} \\
& = & \text{dim} \left( \sum_{\ell = 1}^{j-2}\mathcal{S}\left(A^{(4)}_{j,\ell}\right) \right) + \text{dim} \left(  \mathcal{S}\left(A^{(4)}_{j,j-1}\right)\right) - \text{dim} \left( \sum_{\ell = 1}^{j-2}\mathcal{S}\left(A^{(4)}_{j,\ell}\right)  \cap \mathcal{S}\left(A^{(4)}_{j,j-1}\right)\right) \label{eq:433_proof_17} \\
& \leq & \text{dim} \left( \sum_{\ell = 1}^{j-2}\mathcal{S}\left(A^{(4)}_{j,\ell}\right) \right) + \text{dim} \left(  \mathcal{S}\left(A^{(4)}_{j,j-1}\right)\right) - \text{dim} \left( \mathcal{S}\left(A^{(3)}_{j,j-1}\right)\right), \label{eq:433_proof_18} \\
& = & \text{dim} \left( \sum_{\ell = 1}^{j-2}\mathcal{S}\left(A^{(4)}_{j,\ell}\right) \right) + \rho\left(A^{(4)}_{j,j-1}\right) - \rho\left(A^{(3)}_{j,j-1}\right), \label{eq:433_proof_19}
\end{eqnarray}
where \eqref{eq:433_proof_18} follows from \eqref{eq:433_proof_0b}. The term $\text{dim} \left( \sum_{\ell = 1}^{j-2}\mathcal{S}\left(A^{(4)}_{j,\ell}\right) \right)$ appearing in \eqref{eq:433_proof_19} can be further upper bounded (for the case when $j=4$. If $j=3$ \eqref{eq:433_proof_19} completes the proof) by following a similar sequence of steps as in \eqref{eq:433_proof_16} - \eqref{eq:433_proof_19}. Combining with \eqref{eq:433_proof_15}, we eventually get that 
\begin{eqnarray}
\rho \left(A^{(3)}_{j,j}\right) & \leq &   \sum_{\ell=1}^{j-1}\rho\left(A^{(4)}_{j,\ell} \right) \ - \ \sum_{\ell=2}^{j-1} \rho\left(A^{(3)}_{j,\ell} \right), \ 3 \leq j \leq 4.  
\end{eqnarray}
This completes the proof of \eqref{eq:433_proof_10}.
\end{proof}

\vspace{0.1in}

We will now use the result in Lemma \ref{lem:433_intersections} to get a lower bound on the rank of $H^{(3)}$. The steps that we follow here are similar to those appearing in the calculation of the rank of $H^{(4)}$. Thus, let us define the quantities $\gamma_j, 2 \leq j \leq 4$ such that
\begin{eqnarray}
\gamma_2 & = & \rho\left(H^{(3)}_2\right), \\
\gamma_j & = & \rho\left(H^{(3)}|_{\{2, \ldots, j \}}\right) - \rho\left(H^{(3)}|_{ \{2, \ldots, j-1 \} }\right), \ 3 \leq j \leq 4. \label{eq:433_proof_11r}
\end{eqnarray}
The quantity $\gamma_2$ is given by 
\begin{eqnarray}
\gamma_2 & = & \rho\left(H^{(3)}_2\right) \\
& = & \rho\left(A^{(3)}_{2,2}\right) \label{eq:433_proof_11aa} \\
& = & \rho\left(A^{(4)}_{2,2} \right) \ - \ \left\{ \left(\rho\left(A_{2,2}^{(4)}\right) - \rho\left(A_{2,1}^{(4)}\right)\right)^+  +   \alpha_2\right\} \label{eq:433_proof_11a},
\end{eqnarray}
where \eqref{eq:433_proof_11aa} and \eqref{eq:433_proof_11a} follow from \eqref{eq:433_proof_0} and \eqref{eq:433_proof_9}, respectively. The quantities $\gamma_j, 3 \leq j \leq 4 $ can be lower bounded in the same way we lower bounded $\delta_j, 2 \leq j \leq 4$ in \eqref{eq:433_proof_5a}. Thus, we get that
\begin{eqnarray}
\gamma_j & \geq & \left(\rho\left(A_{j,j}^{(3)}\right) - \sum_{\ell=2}^{j-1}\rho\left(A_{j,\ell}^{(3)}\right)\right)^+ \\
& \geq & \rho\left(A_{j,j}^{(3)}\right) - \sum_{\ell=2}^{j-1}\rho\left(A_{j,\ell}^{(3)}\right) \\
& \geq & 2\rho\left(A_{j,j}^{(3)}\right) - \sum_{\ell=1}^{j-1}\rho\left(A_{j,\ell}^{(4)}\right) \label{eq:433_proof_12r}\\
& = & 2\left[ \rho\left(A^{(4)}_{j,j} \right) \ - \ \left\{ \left(\rho\left(A_{j,j}^{(4)}\right) - \sum_{\ell=1}^{j-1}\rho\left(A_{j,\ell}^{(4)}\right)\right)^+  +   \alpha_j\right\}\right] - \sum_{\ell=1}^{j-1}\rho\left(A_{j,\ell}^{(4)}\right), \ 3 \leq j \leq 4, \label{eq:433_proof_13r}
\end{eqnarray}
where \eqref{eq:433_proof_12r} and \eqref{eq:433_proof_13r} follow from \eqref{eq:433_proof_10} and \eqref{eq:433_proof_9}, respectively. The rank of the matrix $H^{(3)}$ is now given by 
\begin{eqnarray}
\rho\left(H^{(3)}\right) & = &\sum_{j=2}^{4}\gamma_j, \label{eq:proof_433_rankH3_short}
\end{eqnarray}
where $\gamma_2$ is given by \eqref{eq:433_proof_11a}, and $\gamma_3, \gamma_4$ are lower bounded as given by \eqref{eq:433_proof_13r}. 

\vspace{0.1in}

\subsubsection{Comparison of the ranks of the matrices $H^{(3)}$ and $H^{(4)}$}

We are now in a position to compare the ranks of the matrices $H^{(3)}$ and $H^{(4)}$. Recall from \eqref{eq:433_proof_7} that the rank of the matrix $H^{(4)}$ is given by
\begin{eqnarray}
\rho\left( H^{(4)} \right) & = & \rho\left(A^{(4)}_{1,1}\right) \ + \  \sum_{j=2}^{4} \left\{ \left(\rho\left(A_{j,j}^{(4)}\right) - \sum_{\ell=1}^{j-1}\rho\left(A_{j,\ell}^{(4)}\right)\right)^+  +   \alpha_j \right\}. \label{eq:433_proof_14r}
\end{eqnarray}
The goal is to obtain a lower bound on $\sum_{j=2}^{4}\alpha_j$ via the comparison $\rho\left( H^{(4)} \right) \geq \rho\left( H^{(3)} \right)$, and then use this lower bound in \eqref{eq:433_proof_14r} to get the desired lower bound on $\rho\left( H^{(4)} \right)$. One can verify that the comparison yields the following lower bound on $\sum_{j=2}^{4}\alpha_j$:
\begin{eqnarray}
\sum_{j=2}^{4}\alpha_j & \geq & \frac{1}{3}\left\{ -\rho\left(A^{(4)}_{1,1}\right) + \rho\left(A^{(4)}_{2,2}\right)  +
2\sum_{j=3}^{4}\rho\left(A^{(4)}_{j,j}\right) - \right. \nonumber \\ 
& & \left. \left[2\left( \rho\left(A^{(4)}_{2,2}\right) - \rho\left(A^{(4)}_{2,1}\right)\right)^+     + 3\sum_{j=3}^{4}\left( \rho\left(A^{(4)}_{j,j}\right) - \sum_{\ell=1}^{j-1}\rho\left(A^{(4)}_{j,\ell}\right)\right)^+ + \sum_{j=3}^{4} \sum_{\ell=1}^{j-1}\rho\left(A^{(4)}_{j,\ell}\right)\right]
 \right\} \label{eq:433_proof_15r}
\end{eqnarray}
An upper bound on the rank of $H^{(4)}$ is now obtained by substituting  \eqref{eq:433_proof_15r} back in \eqref{eq:433_proof_14r}. The result is stated formally in the following theorem: 

\vspace{0.1in}

\begin{thm} \label{thm:433_rankH4}
\begin{eqnarray}
\rho\left( H^{(4)} \right) & \geq & \frac{1}{3}\left\{ 2\sum_{j=1}^{4}\rho\left(A^{(4)}_{j,j}\right) - \sum_{j=2}^{4} \sum_{\ell=1}^{j-1}\rho\left(A^{(4)}_{j,\ell}\right)\right\}. \label{eq:433_proof_16r}
\end{eqnarray}
\end{thm}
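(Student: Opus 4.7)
The plan is to directly combine the ingredients that have already been assembled just before the theorem statement: the identity \eqref{eq:433_proof_14r} for $\rho(H^{(4)})$ in terms of the nonnegative slack variables $\{\alpha_j\}$, the lower bound \eqref{eq:proof_433_rankH3_short} on $\rho(H^{(3)})$, and the trivial inequality $\rho(H^{(4)}) \geq \rho(H^{(3)})$ (which holds because, by construction, every column of $H^{(3)}$ lies in $\mathcal{S}(H^{(4)})$). The goal is to turn the comparison $\rho(H^{(4)})\geq \rho(H^{(3)})$ into the lower bound \eqref{eq:433_proof_15r} on $\sum_{j=2}^{4}\alpha_j$, and then re-substitute this into \eqref{eq:433_proof_14r} so as to eliminate the $\alpha_j$ in favor of the $\rho(A^{(4)}_{j,\ell})$.

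First I will write $\rho(H^{(3)})=\gamma_2+\gamma_3+\gamma_4$, plug in $\gamma_2$ from \eqref{eq:433_proof_11a} and the lower bounds \eqref{eq:433_proof_13r} for $\gamma_3,\gamma_4$, and rearrange the resulting inequality $\rho(H^{(4)})\geq \rho(H^{(3)})$ so that the $\alpha_j$'s are collected on the left. The coefficient of $\alpha_j$ on the right side of \eqref{eq:433_proof_13r} is $2$ for $j=3,4$, while the coefficient of $\alpha_2$ coming from $\gamma_2$ is $1$, and on the left side \eqref{eq:433_proof_14r} contributes coefficient $1$ to each $\alpha_j$; solving gives a net coefficient of $3$ in front of $\sum_{j=2}^{4}\alpha_j$, which produces the factor $1/3$ in \eqref{eq:433_proof_15r}. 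Collecting the remaining terms in the $\rho(A^{(4)}_{\cdot,\cdot})$'s and the $(\cdot)^+$'s exactly yields \eqref{eq:433_proof_15r}.

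Next I substitute \eqref{eq:433_proof_15r} back into \eqref{eq:433_proof_14r}. A small bit of bookkeeping on the $(\cdot)^+$ terms is the only nontrivial point: for $j=3,4$ the coefficient of $\bigl(\rho(A^{(4)}_{j,j})-\sum_{\ell<j}\rho(A^{(4)}_{j,\ell})\bigr)^+$ becomes $1-3/3=0$, so those terms drop out cleanly; for $j=2$ the coefficient is $1-2/3=1/3$, leaving a residual $\tfrac{1}{3}\bigl(\rho(A^{(4)}_{2,2})-\rho(A^{(4)}_{2,1})\bigr)^+$. I handle this residual by the trivial bound $(x)^+\geq x$, turning it into $\tfrac{1}{3}\rho(A^{(4)}_{2,2})-\tfrac{1}{3}\rho(A^{(4)}_{2,1})$; the $\rho(A^{(4)}_{2,1})$ then joins the off-diagonal sum, and the $\rho(A^{(4)}_{j,j})$ coefficients consolidate to $2/3$ uniformly over $j=1,2,3,4$, giving exactly \eqref{eq:433_proof_16r}.

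I do not expect any conceptual obstacle: Lemma \ref{lem:433_intersections} has already done the structural work of relating the block ranks of $H^{(3)}$ to those of $H^{(4)}$, and the mechanism $\rho(H^{(4)})\geq\rho(H^{(3)})$ is what transfers that information into a bound on the slacks $\{\alpha_j\}$. The main thing to be careful about is the two-step handling of the $(\cdot)^+$ terms -- first verifying that the $j=3,4$ contributions cancel exactly (this is what forces the peculiar coefficients $1,2,2$ in front of the $\rho(A^{(4)}_{j,j})$'s inside \eqref{eq:433_proof_15r}) and then using $(x)^+\geq x$ to absorb the leftover $j=2$ term. Once these reductions are executed, the inequality \eqref{eq:433_proof_16r} falls out, and the announced file-size bound \eqref{eq:bound_rank_H_433} will follow in the next step by inserting the ER repair constraint $\rho(A^{(4)}_{j,\ell})\leq \beta$ (for $j\neq\ell$) together with $\rho(A^{(4)}_{j,j})=\alpha$.
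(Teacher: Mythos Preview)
Your proposal is correct and follows essentially the same route as the paper: derive \eqref{eq:433_proof_15r} from the comparison $\rho(H^{(4)})\geq\rho(H^{(3)})$ using \eqref{eq:433_proof_14r}, \eqref{eq:433_proof_11a}, \eqref{eq:433_proof_13r}, and then substitute back into \eqref{eq:433_proof_14r}. One small correction to your bookkeeping: the raw comparison actually yields $2\alpha_2+3\alpha_3+3\alpha_4$ on the left (not a uniform coefficient $3$), and one silently uses $\alpha_2\geq 0$ to pass to $3\sum_{j=2}^{4}\alpha_j$ before dividing; with that adjustment your cancellation of the $j=3,4$ plus-parts and the $(x)^+\geq x$ absorption of the residual $j=2$ term go through exactly as you describe, and the paper's proof (which leaves these verifications to the reader) is identical in spirit.
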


Finally, to get the bound in Theorem \ref{thm:433}, we invoke the facts that $\rho\left(A^{(4)}_{j,j}\right) = \alpha,  1 \leq j \leq 4$ and $\rho\left(A^{(4)}_{i,j}\right) \leq \beta, \ 1 \leq i, j \leq 4, i \neq j$. Using these expressions in \eqref{eq:433_proof_16r}, we get that
\begin{eqnarray}
\rho\left( H^{(4)} \right) & \geq & \frac{1}{3}\left\{ 2\sum_{j=1}^{4}\alpha - \sum_{j=2}^{4} (j-1)\beta\right\} \\
& = &  \frac{8\alpha - 6\beta}{3}. 
\end{eqnarray}
The use of the ceil function in Theorem \ref{thm:433} is justified by the fact the ranks are integers.

\vspace{0.1in}

\begin{note} \label{rem:resuse433}
Note that in the preceding discussion, we never used the facts that $\rho\left(A^{(4)}_{j,j}\right) = \alpha,  1 \leq j \leq 4$ and $\rho\left(A^{(4)}_{i,j}\right) \leq \beta, \ 1 \leq i, j \leq 4, i \neq j$ till (including) Theorem \ref{thm:433_rankH4}. The lower bound in Theorem \ref{thm:433_rankH4} holds for any matrix $H^{(4)}  = [H^{(4)}_1 \ H^{(4)}_2 \ H^{(4)}_3 \ H^{(4)}_4 ] = \left( A^{(4)}_{i,j} , 1 \leq i, j \leq 4 \ \right )$ having the following properties:
\begin{enumerate}[1.]
\item For any $j, 1 \leq j \leq 4$, the columns of $H^{(4)}_j$ are all linearly independent,
\item $\rho\left( H^{(4)}_j \right) = \rho\left( A^{(4)}_{j,j} \right), \ 1 \leq j \leq 4$.
\end{enumerate}
As we will see, this fact will enable us to reuse Theorem \ref{thm:433_rankH4} in a certain way, for the case of $(5, 4, 4)$ as well.
 
\end{note}

\section{The Trade-off of $(5, 4, 4)$ ER Linear Regenerating Codes} \label{sec:544}

In this section, we will prove Theorem \ref{thm:new_bound_k_eq_d} for the case of $(5, 4, 4) (\alpha, \beta)$ ER linear regenerating codes. The proof for the case of $(5, 4, 4)$ is essentially built on top of the proof for the case of $(4, 3, 3)$ and involves one additional idea which was not present (rather, not needed) in the case of $(4, 3, 3)$. As we will see in Section \ref{sec:general}, this extra step is the key to deriving the proof for any general $n$. We will keep the focus of this section on the main steps of the proof without giving detailed calculations, wherever such calculations resemble those for the case of $(4, 3, 3)$.  Like in the case of $(4, 3, 3)$, we begin with a re-statement of Theorem \ref{thm:new_bound_k_eq_d} for the case of $(5, 4, 4)$, where the result is stated in terms of $\text{rank}(H)$. 

\vspace{0.1in}

\begin{thm} \label{thm:544}
Consider an exact repair linear regenerating code $\mathcal{C}$, having parameters $(n=5, k=4, d=4,), (\alpha, \beta)$. Let the matrix $H$ correspond to the dual of the code $\mathcal{C}$. Then, the rank of the matrix $H$ is lower bounded by
\begin{eqnarray} \label{eq:bound_rank_H_544}
\text{rank}(H) & \geq & \left \{ \begin{array}{c} \left \lceil \frac{10(\alpha - \beta)}{3} \right \rceil, \ 2\beta \leq \alpha \leq 4\beta \\  \\
\left \lceil \frac{15\alpha - 10\beta}{6} \right \rceil, \ \frac{4}{3}\beta \leq \alpha \leq 2\beta \\  \\
2\alpha - \beta, \ \beta \leq \alpha \leq \frac{4}{3}\beta \end{array} \right. .
\end{eqnarray}
\end{thm}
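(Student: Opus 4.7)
The strategy is to lift the proof of Theorem \ref{thm:433} one level higher by introducing an additional intersection step and then reusing Theorem \ref{thm:433_rankH4} on the derived matrix; this is the ``additional idea'' alluded to at the start of the section.

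First, set $H^{(5)} = H_{repair}$ with $5\times 5$ blocks $A^{(5)}_{i,j}$, write rank increments $\delta_1 = \rho(H^{(5)}_1)$ and $\delta_j = \rho(H^{(5)}|_{[j]}) - \rho(H^{(5)}|_{[j-1]})$ for $2 \leq j \leq 5$, and parameterize $\delta_j = \left(\rho(A^{(5)}_{j,j}) - \sum_{\ell < j} \rho(A^{(5)}_{j,\ell})\right)^+ + \alpha_j$ with slack $\alpha_j \geq 0$, exactly as in Section \ref{sec:proof_thm_433}. Then $\rho(H) = \rho(H^{(5)}) = \sum_{j=1}^{5} \delta_j$, and the third-regime bound $\rho(H) \geq 2\alpha - \beta$ in \eqref{eq:bound_rank_H_544} is immediate from \eqref{eq:dimakis_proof_4}. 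The real work concerns the first two regimes.

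Next, construct $H^{(4)}$ from $H^{(5)}$ by exactly the intersection operation that built $H^{(3)}$ from $H^{(4)}$ in the $(4,3,3)$ proof: for each $j = 2,\dots,5$, take the columns of $H^{(4)}_j$ to be a basis of $\mathcal{S}(H^{(5)}_j) \cap \mathcal{S}(H^{(5)}|_{[j-1]})$, and assemble the block decomposition $H^{(4)} = \left(A^{(4)}_{i,j}\right)_{1 \leq i \leq 5,\; 2 \leq j \leq 5}$. Establish the $(5,4,4)$ analog of Lemma \ref{lem:433_intersections}, namely $\rho(A^{(4)}_{j,j}) = \rho(A^{(5)}_{j,j}) - \delta_j$ for $j \geq 2$ and $\sum_{\ell=2}^{j-1} \rho(A^{(4)}_{j,\ell}) \leq \sum_{\ell=1}^{j-1} \rho(A^{(5)}_{j,\ell}) - \rho(A^{(4)}_{j,j})$ for $j \geq 3$; both follow from the same modular-law and subspace-inclusion arguments used in \eqref{eq:433_proof_11}--\eqref{eq:433_proof_19}, iterated one more time. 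Verify via Remark \ref{rem:resuse433} that $H^{(4)}$ meets the hypotheses of Theorem \ref{thm:433_rankH4} --- its thick columns are linearly independent by construction and $\rho(H^{(4)}_j) = \rho(A^{(4)}_{j,j})$ by the extension-to-basis reasoning that gave \eqref{eq:433_proof_0} --- and apply Theorem \ref{thm:433_rankH4} to it, obtaining a lower bound on $\rho(H^{(4)})$ in terms of the $\rho(A^{(4)}_{i,j})$.

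Finally, invoke $\rho(H^{(5)}) \geq \rho(H^{(4)})$. Substituting the Lemma analogs rewrites the right-hand side as a linear expression in the $\rho(A^{(5)}_{i,j})$ and the slacks $\alpha_j$; solving the resulting inequality for $\sum_{j=2}^{5} \alpha_j$ and feeding this back into $\rho(H^{(5)}) = \sum_j \delta_j$ produces a bound of exactly the same shape as Theorem \ref{thm:433_rankH4}, now over the full $5\times 5$ block structure:
\[
\rho(H^{(5)}) \;\geq\; \frac{1}{3}\left\{ 2 \sum_{j=1}^{5} \rho(A^{(5)}_{j,j}) - \sum_{j=2}^{5}\sum_{\ell=1}^{j-1} \rho(A^{(5)}_{j,\ell})\right\}.
\]
Substituting $\rho(A^{(5)}_{j,j}) = \alpha$ and $\rho(A^{(5)}_{i,j}) \leq \beta$, and rounding up because ranks are integers, delivers the first-regime bound $\lceil (10\alpha - 10\beta)/3 \rceil$ for $2\beta \leq \alpha \leq 4\beta$. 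The middle-regime bound $\lceil (15\alpha - 10\beta)/6 \rceil$ then follows by combining this chain of inequalities with the sharper estimate $\rho(H^{(5)}) \geq 3\alpha - 3\beta$ available from \eqref{eq:dimakis_proof_4} when $\alpha \leq 2\beta$; averaging the two produces the interpolating linear piece.

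The main obstacle is bookkeeping rather than conceptual: establishing the Lemma \ref{lem:433_intersections} analog with one extra index requires iterating the modular-law identity $\dim(U+V) = \dim U + \dim V - \dim(U\cap V)$ once more, and the algebra that extracts a clean lower bound on $\sum_j \alpha_j$ from $\rho(H^{(5)}) \geq \rho(H^{(4)})$ must balance coefficients so that the final expression is symmetric in the block-ranks of $H^{(5)}$. Tracking which of the positive-part terms $\big(\cdot\big)^+$ are active in the three regimes of $\alpha/\beta$ is precisely where the three cases of Theorem \ref{thm:544} separate.
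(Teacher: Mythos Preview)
Your assembly of the two bounds is inverted, and the ``averaging'' step is invalid.

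The procedure you describe for the first regime --- apply Theorem~\ref{thm:433_rankH4} as a black box to $H^{(4)}$ to get $\rho(H^{(4)})\geq \tfrac{1}{3}\{2\sum_{j=2}^5\rho(A^{(4)}_{j,j})-\sum_{j=3}^5\sum_{\ell=2}^{j-1}\rho(A^{(4)}_{j,\ell})\}$, substitute the Lemma~\ref{lem:433_intersections} analogs, solve for $\sum\alpha_j$, and feed back --- does \emph{not} produce the displayed $\tfrac{1}{3}\{2\sum-\sum\}$ bound on $\rho(H^{(5)})$. Carry out the algebra and you will find it produces
\[
\rho(H^{(5)})\;\geq\;\frac{1}{6}\Bigl\{3\sum_{j=1}^{5}\rho(A^{(5)}_{j,j})-\sum_{j=2}^{5}\sum_{\ell=1}^{j-1}\rho(A^{(5)}_{j,\ell})\Bigr\},
\]
which after substituting $\alpha,\beta$ is exactly the middle-regime bound $\lceil(15\alpha-10\beta)/6\rceil$. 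This is the ``additional idea'': reusing Theorem~\ref{thm:433_rankH4} on the derived matrix yields the \emph{second} bound, not the first. The first-regime bound $\lceil(10\alpha-10\beta)/3\rceil$ is obtained by the more elementary route of repeating the \emph{proof} of Theorem~\ref{thm:433_rankH4} with one extra index --- i.e.\ lower-bounding $\rho(H^{(4)})$ via the direct $\gamma_j$ decomposition \eqref{eq:433_proof_11r}--\eqref{eq:proof_433_rankH3_short}, not via the theorem statement --- and then comparing with $\rho(H^{(5)})$.

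Separately, your derivation of the middle-regime bound by ``averaging'' cannot work: if $\rho(H)\geq a$ and $\rho(H)\geq b$ then $\rho(H)\geq\max(a,b)\geq(a+b)/2$, so any convex combination of existing lower bounds is weaker than their maximum, never stronger. Concretely, at $\alpha=\tfrac{5}{3}\beta$ one has $(10\alpha-10\beta)/3=\tfrac{20}{9}\beta$ and $2\alpha-\beta=\tfrac{7}{3}\beta$, whose maximum is $\tfrac{7}{3}\beta<\tfrac{5}{2}\beta=(15\alpha-10\beta)/6$; no averaging of these two can reach the required value. The middle bound genuinely requires the extra intersection step (applying Theorem~\ref{thm:433_rankH4} to $H^{(4)}$) that you assigned to the wrong regime.
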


\vspace{0.1in}

Like in the of case of $(4, 3, 3)$, the fact $\text{rank}(H)  \geq 2\alpha - \beta, \beta \leq \alpha \leq 2\beta $ follows from our proof of FR-trade-off, see \eqref{eq:dimakis_proof_4}. Thus, notice that in order to prove Theorem \ref{thm:544}, it suffices to prove the following bounds on $\text{rank}(H)$ individually, without considering any particular range of $\alpha$:
\begin{eqnarray} 
\textit{Bound  1} : \ \ \text{rank}(H)  & \geq & \left \lceil \frac{10(\alpha - \beta)}{3} \right \rceil \label{eq:bound1_544} \\
\textit {Bound 2} :  \ \ \text{rank}(H)  & \geq & \left \lceil \frac{15\alpha - 10\beta}{6} \right \rceil \label{eq:bound2_544}.
\end{eqnarray}

We will now separately illustrate the main steps involved in the proofs of \eqref{eq:bound1_544} and \eqref{eq:bound2_544}. 

\vspace{0.1in}

\subsection{Proof of \eqref{eq:bound1_544}} \label{sec:proof_bound1_544}

The bound in \eqref{eq:bound1_544} can be derived exactly in the same way we derived \eqref{eq:433_proof_4} for the case of $(4, 3, 3)$. Thus, we define the matrices $H^{(5)}$ and $H^{(4)}$ (in the same way we defined the matrices $H^{(4)}$ and $H^{(3)}$ for the case of $(4, 3, 3)$) and make the comparison $\rho\left(H^{(5)}\right) \geq \rho\left(H^{(4)}\right)$. More formally, the matrices are defined as follows :
\begin{eqnarray}
H^{(5)} & = & H_{repair},
\end{eqnarray}
where $H_{repair}$ is as given by \eqref{eq:Hrepair}. Also, let the matrix $H^{(5)}_j$ denote the $j^{\text{th}}$ thick column of $H^{(5)}, 1 \leq j \leq 5$, i.e., $H^{(5)} = [H^{(5)}_1 \ H^{(5)}_2  H^{(5)}_3 \ H^{(5)}_4 \ H^{(5)}_5]$. Next, define the matrices $H^{(4)}_j,  2 \leq j \leq 5$, such that the columns of $H^{(4)}_j$ form a basis for the vector space $\mathcal{S}\left(H^{(5)}_j\right) \cap \mathcal{S}\left(H^{(5)}|_{[j-1]}\right)$. Also define the matrix $H^{(4)}$ as
\begin{eqnarray} \label{eq:proof544_a}
H^{(4)} = [H^{(4)}_2  H^{(4)}_3 \ H^{(4)}_4 \ H^{(4)}_5].
\end{eqnarray}
Both the matrices $H^{(5)}_j$ and $H^{(4)}_j$ are also associated with corresponding block-submatrix representations, i.e., 
\begin{eqnarray}
H^{(5)} & = & \left( A^{(5)}_{i,j} , 1 \leq i, j \leq 5 \ \right ),  \\
H^{(4)} & = & \left( A^{(4)}_{i,j} , 1 \leq i \leq 5, 2 \leq j \leq 5 \ \right ). \label{eq:proof544_b}
\end{eqnarray}
The calculation of $\rho\left(H^{(5)}\right)$ and $\rho\left(H^{(4)}\right)$ follow steps similar to those in \eqref{eq:433_proof_5z}-\eqref{eq:433_proof_7} and \eqref{eq:433_proof_8}-\eqref{eq:proof_433_rankH3_short},respectively.
The subsequent comparison of the two ranks (i.e, $\rho(H^{(5)}) \geq \rho(H^{(4)})$) yields the following lower bound on the rank of $H^{(5)}$:
\begin{eqnarray}
\rho\left( H^{(5)} \right) & \geq & \frac{1}{3}\left\{ 2\sum_{j=1}^{5}\rho\left(A^{(5)}_{j,j}\right) - \sum_{j=2}^{5} \sum_{\ell=1}^{j-1}\rho\left(A^{(5)}_{j,\ell}\right)\right\} \label{eq:proof544_rankH5}.
\end{eqnarray}
Note that \eqref{eq:proof544_rankH5} is the analogue of the bound in Theorem \ref{thm:433_rankH4} (see \ref{eq:433_proof_16}), where  the upper limits of summations have been changed from $4$ to $5$, and $\{A^{(4)}_{i, j}\}$ have been replaced by $\{A^{(5)}_{i,j}\}$. Now, we invoke the facts that $\rho\left(A^{(5)}_{j,j}\right) = \alpha, 1 \leq j \leq 5$ and $\rho\left(A^{(5)}_{i,j}\right) \leq \beta, 1 \leq i, j \leq 5, \ i \neq j$. Thus, we get 
\begin{eqnarray}
\rho\left( H^{(5)} \right) & \geq &  \frac{1}{3}\left\{ 2\sum_{j=1}^{5}\alpha - \sum_{j=2}^{5} (j-1)\beta\right\} \\
& = &  \frac{10(\alpha - \beta)}{3}. 
\end{eqnarray}
This completes the proof of \eqref{eq:bound1_544}.

\vspace{0.1in}

\subsection{Proof of \eqref{eq:bound2_544}} \label{sec:proof_bound2_544}

This is the part which is new in the case of $(5, 4, 4)$. For proving \eqref{eq:bound2_544}, consider the matrix $H^{(4)}$ as defined in \eqref{eq:proof544_a}, and whose block submatrix representation is as given by \eqref{eq:proof544_b}, i.e.,
\begin{eqnarray}
H^{(4)} & = & [H^{(4)}_2 \ H^{(4)}_3 \ H^{(4)}_4 \ H^{(4)}_5  ] \ = \ \left( A^{(4)}_{i,j} , 1 \leq i \leq 5, 2 \leq j \leq 5 \ \right ), \label{eq:proof544_c}
\end{eqnarray}
Define the submatrix $\tilde{H}^{(4)}$ of $H^{(4)}$ as follows:
\begin{eqnarray}
\tilde{H}^{(4)} & = & \left( A^{(4)}_{i,j} , 2 \leq i \leq 5, 2 \leq j \leq 5 \ \right ), 
\end{eqnarray}
i.e., $\tilde{H}^{(4)}$ is formed by excluding the first thick row of $H^{(4)}$. The matrix $\tilde{H}^{(4)}$ has exactly $4$ thick rows and 4 thick columns. Next, observe that the matrix  $H^{(4)}$ satisfies the following properties:
\begin{enumerate}[1.]
\item For all $j, 2 \leq j \leq 5$, the columns of $H^{(4)}_j$ are linearly independent, and
\item $\rho\left(H^{(4)}_j\right) = \rho\left(A^{(4)}_{j,j}\right), \ 2 \leq j \leq 5$. 
\end{enumerate}
The first property follows directly from the definition of $H^{(4)}_j$ (since we know that the columns form a basis for a certain subspace), while the second property is the analogue of \eqref{eq:433_proof_0} for the case of $(5, 4, 4)$. Also, note that the above two properties together imply that the matrix $A^{(4)}_{j,j}, 2 \leq j \leq 5$ has full column rank. It then follows that the sub-matrix $\tilde{H}^{(4)}$ satisfies the following properties :
\begin{enumerate}[1.]
\item $\rho\left( \tilde{H}^{(4)}_j \right) = \rho\left( A^{(4)}_{j,j} \right), \ 2 \leq j \leq 5$. This follows as a result of the fact that the  matrix $A^{(4)}_{j,j}$ has full column rank.
\item For all $j, 2 \leq j \leq 5$, the columns of $\tilde{H}^{(4)}_j$ are all linearly independent. This follows because  1) the previous statement implies that $\rho\left( \tilde{H}^{(4)}_j \right) = \rho\left( H^{(4)}_j \right)$, and 2) the columns of  $H^{(4)}_j$ are linearly independent.
\end{enumerate}
Remark \ref{rem:resuse433} now implies that Theorem \ref{thm:433_rankH4} can be used to lower bound the rank of the matrix $\tilde{H}^{(4)}$. The only thing that needs to be taken care of is the fact that in Theorem $\ref{thm:433_rankH4}$, the indices of the thick columns range from $1$ to $4$, where as for the matrix $\tilde{H}^{(4)}$, they range from $2$ to $5$. Accounting for this variation in Theorem \ref{thm:433_rankH4}, we get
\begin{eqnarray}
\rho\left( H^{(4)} \right) \  \geq \ \rho\left( \tilde{H}^{(4)} \right) & \geq & \frac{1}{3}\left\{ 2\sum_{j=2}^{5}\rho\left(A^{(4)}_{j,j}\right) - \sum_{j=3}^{5} \sum_{\ell=2}^{j-1}\rho\left(A^{(4)}_{j,\ell}\right)\right\}. \label{eq:proof544_d}
\end{eqnarray}
In summary, we have a got a new lower bound on the rank of $H^{(4)}$ and this new bound in \eqref{eq:proof544_d} is, in general, different from what is used to prove \eqref{eq:bound1_544}. We will now show that \eqref{eq:bound2_544} is obtained as necessary condition  for satisfying $\rho(H^{(5)}) \geq \rho(H^{(4)})$, where  $\rho(H^{(4)})$ is assumed to be lower bounded as in \eqref{eq:proof544_d}. Towards this, note from \eqref{eq:proof544_rankH5} that $\rho(H^{(5)})$ is  given by
\begin{eqnarray}
\rho\left( H^{(5)} \right) & = & \rho\left(A^{(5)}_{1,1}\right) \ + \  \sum_{j=2}^{5} \left\{ \left(\rho\left(A_{j,j}^{(5)}\right) - \sum_{\ell=1}^{j-1}\rho\left(A_{j,\ell}^{(5)}\right)\right)^+  +   \alpha_j \right\}. \label{eq:proof544_e} 
\end{eqnarray}
Recall that \eqref{eq:proof544_e} is the same expression for $\rho(H^{(5)})$ that is used in the proof of $\eqref{eq:bound1_544}$. Given the two bounds on $\rho(H^{(5)})$ and $\rho(H^{(4)})$, the remaining sequence of steps that need to be carried out to complete proof of \eqref{eq:bound2_544} are similar to what we do for the proof of \eqref{eq:bound1_544}. These are given as follows: 
\begin{enumerate}[1.]
\item Express the quantities $\{A^{(4)}_{i,j}\}$ appearing in  \eqref{eq:proof544_d} in terms of $\{A^{(5)}_{i,j}\}$. This is accomplished via an analogue of Lemma \ref{lem:433_intersections} for the case of $(5, 4, 4)$.
\item Next, obtain a lower bound on $\sum_{j=2}^{5}\alpha_j$ by invoking the comparison 
$\rho(H^{(5)}) \geq \rho(H^{(4)})$, where $\rho(H^{(5)})$ and $\rho(H^{(4)})$ are given by \eqref{eq:proof544_e} and \eqref{eq:proof544_d}, respectively.  
\item Finally, use the lower bound on $\sum_{j=2}^{5}\alpha_j$ back in \eqref{eq:proof544_e} to get a lower bound on $\rho(H^{(5)})$. 
\end{enumerate}
We will defer the calculations of the above three steps until Section \ref{sec:general}, where we give the full proof of Theorem \ref{thm:new_bound_k_eq_d}. As we will see, the bound on $\rho(H^{(5)})$ that is obtained by after carrying out the above steps is given by
\begin{eqnarray}
\rho\left( H^{(5)} \right) & \geq & \frac{1}{6}\left\{ 3\sum_{j=1}^{5}\rho\left(A^{(5)}_{j,j}\right) - \sum_{j=2}^{5} \sum_{\ell=1}^{j-1}\rho\left(A^{(5)}_{j,\ell}\right)\right\} \label{eq:proof544_f}.
\end{eqnarray}
Given  \eqref{eq:proof544_f}, we invoke the facts that $\rho\left(A^{(5)}_{j,j}\right) = \alpha, 1 \leq j \leq 5$ and $\rho\left(A^{(5)}_{i,j}\right) \leq \beta, 1 \leq i, j \leq 5, \ i \neq j$. Thus, we get 
\begin{eqnarray}
\rho\left( H^{(5)} \right) & \geq &  \frac{1}{6}\left\{ 3\sum_{j=1}^{5}\alpha - \sum_{j=2}^{5} (j-1)\beta\right\} \\
& = &  \frac{15\alpha - 10\beta}{6}. 
\end{eqnarray}
This completes the proof of \eqref{eq:bound2_544}.

\vspace{0.1in}

\begin{note}
We would like to mention that the equivalent of Theorem \ref{thm:433_rankH4} for the case of $(5, 4, 4)$ involves putting together the three equations \eqref{eq:proof544_rankH5}, \eqref{eq:proof544_d} and \eqref{eq:proof544_f}.
\end{note}

\vspace{0.1in}

\section{Proof of Theorem \ref{thm:new_bound_k_eq_d} for general $(n, k = n-1, d = n-1)$} \label{sec:general}

\vspace{0.1in}

In this section, we will prove Theorem \ref{thm:new_bound_k_eq_d} for any general $(n, k = n-1, d = n-1)$. We begin with a re-statement of Theorem \ref{thm:new_bound_k_eq_d} in terms of rank of $H$.

\vspace{0.1in}

\begin{thm} \label{thm:rankH_new_bound_k_eq_d}
Consider an exact repair linear regenerating code $\mathcal{C}$, having parameters $(n, k = n-1, d = n-1), (\alpha, \beta)$. Let the matrix $H$ correspond to the dual of the code $\mathcal{C}$. Then, the rank of the matrix $H$ is lower bounded by
\begin{eqnarray} \label{eq:bound_rank_H_gen}
\text{rank}(H) & \geq & \left \{ \begin{array}{c} \left \lceil \frac{2rn\alpha - n(n-1)\beta}{r^2+r}\right \rceil, \ \frac{d\beta}{r} \leq \alpha \leq \frac{d\beta}{r-1}, \text{ where } 2 \leq r \leq n - 2 \\  \\
2\alpha - \beta, \ \frac{d\beta}{n-1} \leq \alpha \leq \frac{d\beta}{n-2} \end{array} \right. .
\end{eqnarray}
\end{thm}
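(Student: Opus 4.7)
The plan is to prove, by induction on $r$, the following rank-based claim, which is stronger than Theorem \ref{thm:rankH_new_bound_k_eq_d}: for every $r$ with $2 \leq r \leq n-1$,
\begin{equation*}
\rho\!\left(H^{(n)}\right) \ \geq \ \frac{2}{r(r+1)}\left\{r\sum_{j=1}^{n}\rho\!\left(A^{(n)}_{j,j}\right) - \sum_{j=2}^{n}\sum_{\ell=1}^{j-1}\rho\!\left(A^{(n)}_{j,\ell}\right)\right\},
\end{equation*}
where $H^{(n)} = H_{repair}$ and $\{A^{(n)}_{i,j}\}$ is its block partition as in Lemma \ref{lem:H_repair}. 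Once this is established, substituting $\rho(A^{(n)}_{j,j})=\alpha$ and $\rho(A^{(n)}_{i,j})\leq \beta$ and taking the ceiling gives the first branch of \eqref{eq:bound_rank_H_gen}, while the second branch $\rho(H)\geq 2\alpha-\beta$ already follows from Section \ref{sec:dimakis_via_dual}. As a sanity check, the uniform formula above reduces to $2\alpha-\beta$ when $r=n-1$, consistent with the FR bound.

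The base case $r=2$ is handled by generalizing the $(4,3,3)$ argument of Section \ref{sec:proof_thm_433} to $n$ thick columns: construct $H^{(n-1)}$ from $H^{(n)}$ via intersection bases, compute $\rho(H^{(n)})$ via the $\delta_j$-decomposition \eqref{eq:433_proof_5z}--\eqref{eq:433_proof_7} with summations extended to $n$, bound $\rho(H^{(n-1)})$ by the natural extension of Lemma \ref{lem:433_intersections}, and invoke $\rho(H^{(n)})\geq \rho(H^{(n-1)})$ to lower-bound $\sum_{j=2}^{n}\alpha_j$; this mirrors the proof of \eqref{eq:bound1_544} in Section \ref{sec:proof_bound1_544}. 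For the inductive step $3\leq r\leq n-1$, assume the claim at parameter $r-1$ for size $n-1$. Construct $H^{(n-1)}$ as above, and let $\tilde H^{(n-1)}$ be obtained by deleting its first thick row; as argued in Section \ref{sec:proof_bound2_544}, $\tilde H^{(n-1)}$ has $n-1$ thick rows and columns and satisfies the hypotheses listed in Remark \ref{rem:resuse433}. Applying the inductive hypothesis at parameter $r-1$ to $\tilde H^{(n-1)}$ yields
\begin{equation*}
\rho\!\left(\tilde H^{(n-1)}\right) \ \geq \ \frac{2}{(r-1)r}\left\{(r-1)\sum_{j=2}^{n}\rho\!\left(A^{(n-1)}_{j,j}\right) - \sum_{j=3}^{n}\sum_{\ell=2}^{j-1}\rho\!\left(A^{(n-1)}_{j,\ell}\right)\right\},
\end{equation*}
which is also a lower bound on $\rho(H^{(n-1)})$. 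Expressing the ranks of the blocks $\{A^{(n-1)}_{i,j}\}$ in terms of $\{A^{(n)}_{i,j}\}$ together with the $\delta_j$ and $\alpha_j$ quantities --- via the analogue of Lemma \ref{lem:433_intersections} --- and combining with the comparison $\rho(H^{(n)})\geq \rho(H^{(n-1)})$, one derives a lower bound on $\sum_{j=2}^{n}\alpha_j$; substituting this back into the $\delta_j$-expansion for $\rho(H^{(n)})$ produces the inductive conclusion at parameter $r$.

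The main obstacle will be the algebraic bookkeeping in the inductive step: one must verify that the coefficient $\tfrac{2}{(r-1)r}$ at level $r-1$ transforms into exactly $\tfrac{2}{r(r+1)}$ at level $r$ after stacking the Lemma \ref{lem:433_intersections}-type substitutions on top of the recursive inequality, while correctly handling the $(\cdot)^+$ terms and the mismatch between the summation ranges $j\geq 1$ (for $H^{(n)}$) and $j\geq 2$ (for $\tilde H^{(n-1)}$). Showing that the resulting linear combination collapses to precisely the claimed form --- without slack and without residual off-diagonal cross-terms --- is the delicate combinatorial identity at the heart of the proof, and is likely best isolated as a separate lemma (which telescopes the coefficient recursion $\tfrac{2}{(r-1)r}\mapsto\tfrac{2}{r(r+1)}$) before being inserted into the induction.
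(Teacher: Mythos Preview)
Your overall strategy---derive a family of rank inequalities indexed by $r$ and propagate them via a chain of intersection matrices, with the comparison $\rho(H^{(t)})\geq\rho(H^{(t-1)})$ driving each step---is exactly what the paper does; the coefficient recursion $\tfrac{2}{(r-1)r}\mapsto\tfrac{2}{r(r+1)}$ you anticipate is precisely the content of the induction carried out in Appendix~\ref{app:ProofThm}.

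The gap is in how your induction is set up. Your claim is stated only for $H^{(n)}=H_{repair}$, yet in the inductive step you apply ``the claim at parameter $r-1$ for size $n-1$'' to $\tilde H^{(n-1)}$. But $\tilde H^{(n-1)}$ is not the $H_{repair}$ matrix of any $(n-1,n-2,n-2)$ regenerating code---it is an auxiliary matrix extracted from the original code---so the hypothesis as you have written it does not apply to it. Remark~\ref{rem:resuse433} only certifies that the \emph{specific} $r=2$ bound (Theorem~\ref{thm:433_rankH4}) transfers to any matrix with the two structural properties; it does not hand you the bound at arbitrary $r$, which is what your recursion needs. The paper repairs this by proving a two-parameter statement (Theorem~\ref{thm:nkk_rankviainduction}): it constructs the full chain $H^{(n)},H^{(n-1)},\dots,H^{(3)}$ once, asserts the bound at level $s$ for \emph{every} $H^{(t)}$ with $3+s\le t\le n$ (not just $t=n$), and then in the step from $s$ to $s+1$ uses the hypothesis at $(s,t-1)$ to obtain the conclusion at $(s+1,t)$---working entirely inside the fixed chain, with no row deletion and no appeal to a hypothetical smaller code. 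Your plan can be salvaged equivalently by strengthening the inductive hypothesis to hold for any block matrix satisfying the two properties in Remark~\ref{rem:resuse433}, but once you do that you are essentially restating Theorem~\ref{thm:nkk_rankviainduction}.
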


\vspace{0.1in}

Like we mentioned in the special cases of $(4, 3, 3)$ and $(5, 4, 4)$, the fact $\text{rank}(H)  \geq 2\alpha - \beta, \beta \leq \alpha \leq 2\beta $ follows from our proof of FR-trade-off, see \eqref{eq:dimakis_proof_4}. Thus, in order to prove Theorem \ref{thm:rankH_new_bound_k_eq_d}, it suffices to prove the following bound on $\text{rank}(H)$, without considering any particular range of $\alpha$:
\begin{eqnarray} 
\text{rank}(H)  & \geq & \left \lceil \frac{2rn\alpha - n(n-1)\beta}{r^2+r}\right \rceil, 2 \leq r \leq n - 2. \label{eq:general_bound_to_prove}
\end{eqnarray}
Note that there are in fact $n-3$ bounds which needs to be established for the general case, where each bound corresponds to a value of $r$ in the range $ 2 \leq r \leq n-2$. A quick outline of the proof is provided next. We will consider the matrices $H^{(t)}, 3 \leq t \leq n$, where $H^{(n)} = H_{repair}$ corresponding to the matrix $H$, and where the matrix $H^{(t)}, 3 \leq t \leq n-1$ is defined based on matrix $H^{(t+1)}$. We will also have the relation $\rho(H^{(t)})   \geq  \rho(H^{(t-1)}), 3 \leq t \leq n$. The bound in \eqref{eq:general_bound_to_prove} corresponding to the case of a general $r, 2 \leq r \leq n - 2$ will be obtained as necessary condition for satisfying the chain of inequalities given by 
\begin{eqnarray}
\rho(H^{(n)})  & \geq & \rho(H^{(n-1)}) \ \geq \ \rho(H^{(n-2)}) \ \geq \ \ldots \ \geq \rho(H^{(n-(r-2))}) \ \geq \ \rho(H^{(n-(r-1))}).
\end{eqnarray}
Recall that for the case of $(5, 4, 4)$, we used the relations 1) $\rho(H^{(5)}) \geq  \rho(H^{(4)})$ and 2) $\rho(H^{(5)})  \geq  \rho(H^{(4)}) \geq  \rho(H^{(3)})$ to prove \eqref{eq:bound1_544} and \eqref{eq:bound2_544}, respectively.

The rest of the technical discussion in this section is divided as follows:
\begin{enumerate}[1.]
\item Formal definition of the matrices $H^{(t)}, 3 \leq t \leq n$, and also the associated block sub-matrix representations.
\item Establish a generalization of Lemma \ref{lem:433_intersections} which in turn enables us to prove a generalization of Theorem \ref{thm:433_rankH4}. Arguments based on mathematical induction will be used in the proofs of both these generalizations. 
\item Deriving the bound in \eqref{eq:general_bound_to_prove} based on the generalization of Theorem \ref{thm:433_rankH4}.
\end{enumerate}

\vspace{0.1in}

\subsection{Notation}

\subsubsection{The Matrices $\{H^{(t)}\}$}

As stated above, we assume that $H^{(n)} = H_{repair}$, where $H_{repair}$ is as defined by Lemma \ref{lem:H_repair}. Also assume that $H^{(n)}_j$ denotes the $j^{\text{th}}$ thick column of $H^{(n)}, 1 \leq j \leq n$, i.e., 
\begin{eqnarray} \label{eq:proofgen_Hndef}
H^{(n)} & = & [H^{(n)}_1 \ H^{(n)}_2  \ \ldots \  H^{(n)}_n].
\end{eqnarray}
The matrices $H^{(t)}, 3 \leq t \leq n-1$ are iteratively defined as follows: 

\vspace{0.1in}

\begin{framed}
\begin{enumerate}[Step 1.]
\item Let $ t=n-1$.
\item  Define the matrices $H^{(t)}_j,  n-t+1 \leq j \leq n$, such that the columns of $H^{(t)}_j$ form a basis for the vector space $\mathcal{S}\left(H^{(t+1)}_j\right) \cap \mathcal{S}\left(H^{(t+1)}|_{\{n-t, n-t+1, \ldots, j-1\}}\right)$.
\item  Define the matrix $H^{(t)}$ as
\begin{eqnarray} \label{eq:proofgen_Htdef}
H^{(t)} = [H^{(t)}_{n-t+1}  \ H^{(t)}_{n-t+2} \ \ldots \ H^{(t)}_n].
\end{eqnarray}
\item  If $t \geq 4$, decrement $t$ by $1$ and go back to Step $2$. 
\end{enumerate}
\end{framed}

Steps $2$, $3$, $4$ are carried out in that sequence a total of $n-3$ times so that all the matrices $H^{(t)}, 3 \leq t \leq n-1$ get defined. Clearly, the ranks of the matrices $H^{(t)}, 3 \leq t \leq n$ are ordered as
\begin{eqnarray} \label{eq:proofgen_rankorderHs}
\rho(H^{(n)})  & \geq & \rho(H^{(n-1)}) \ \geq \ \ \ldots \ \geq \rho(H^{(4)}) \ \geq \ \rho(H^{(3)}).
\end{eqnarray}

\vspace{0.1in}

\subsubsection{Block Sub-Matrix Representation of $H^{(t)}$}

The block submatrix representation for the matrix $H^{(n)}$ is given by \eqref{eq:Hrepair}. Like we did in the case of $(4, 3, 3)$ and $(5, 4, 4)$, for easiness of notation, \eqref{eq:Hrepair} will be denoted as 
\begin{eqnarray}
H^{(n)} & = & \left( A^{(n)}_{i,j} , 1 \leq i, j \leq n \ \right ),
\end{eqnarray}
where $A^{(n)}_{i,i} = I_{\alpha}, 1 \leq i \leq n$, and when $i \neq j$, we have added a superscript on $A_{i,j}$. 
Next, we introduce block sub-matrix representations for the other $n-3$ matrices $H^{(t)}, 3 \leq t \leq n-1$. The matrix $H^{(t)}$ will be identified as 
\begin{eqnarray}
H^{(t)} & = & \left( A^{(t)}_{i,j} , 1 \leq i \leq n, n-t+1 \leq j \leq n \ \right ),
\end{eqnarray}
where  $A^{(t)}_{i,j}$ is an $\alpha \times \rho(H^{(t)}_j)$ matrix (over $\mathbb{F}_q$) such that 
\begin{eqnarray}
\mathcal{S}\left(A^{(t)}_{i,j} \right) & \subseteq & \mathcal{S}\left(A^{(t+1)}_{i,j} \right) \ \bigcap \ \sum_{\ell=n-t}^{j-1}\mathcal{S}\left(A^{(t+1)}_{i,\ell}\right).  \label{eq:proofgen_1}
\end{eqnarray}
Note that \eqref{eq:proofgen_1} is a direct consequence of the definition of the matrix $H^{(t)}$. The following lemma gives additional properties of the matrices $\{A^{(t)}_{i,j}\}$. While the first part of the lemma is a generalization of \eqref{eq:433_proof_0}, the second and third parts together generalize Lemma \ref{lem:433_intersections}.

\vspace{0.1in}

\begin{lem} \label{lem:nkk_intersections}
\begin{enumerate}[a)]
\item
\begin{eqnarray} 
\rho\left(H^{(t)}_j\right) & = & \rho\left( A^{(t)}_{j,j}\right),  \ 3 \leq t \leq n, \ n-t+1 \leq j \leq n. \label{eq:proofgen_2}
\end{eqnarray}
\item 
\begin{eqnarray}
\rho\left(A^{(t)}_{j,j} \right) & = & \rho\left(A^{(t+1)}_{j,j} \right) \ - \ \left\{ \rho\left(H^{(t+1)}|_{\{n-t, \ldots, j\}}\right) - \rho\left(H^{(t+1)}|_{\{n-t, \ldots, j-1\}}\right) \right\}, \nonumber \\
& & \hspace{2in} \ 3 \leq t \leq n-1, \ n-t+1 \leq j \leq n. \label{eq:proofgen_3}
\end{eqnarray}
\item   
\begin{eqnarray}
\sum_{\ell=n-t+1}^{j-1} \rho\left(A^{(t)}_{j,\ell} \right) & \leq & \sum_{\ell=n-t}^{j-1}\rho\left(A^{(t+1)}_{j,\ell} \right) \ - \ \rho\left(A^{(t)}_{j,j} \right), \ 3 \leq t \leq n-1, \ n-t+2 \leq j \leq n. \label{eq:proofgen_4}
\end{eqnarray}
\end{enumerate}
\end{lem}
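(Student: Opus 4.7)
The proof splits naturally into the three parts. I would prove part (a) first by downward induction on $t$, since it underlies the other two; then part (b) follows from (a) combined with the standard dimension-sum identity; and part (c) uses the defining inclusion \eqref{eq:proofgen_1} together with an iterated application of the same identity. Throughout, the plan is to reuse, almost verbatim, the $(4,3,3)$ arguments from Lemma \ref{lem:433_intersections}, with $\{1,\ldots,j-1\}$ replaced by $\{n-t,\ldots,j-1\}$.

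For part (a), the base case is $t=n$, where $A^{(n)}_{j,j} = I_{\alpha}$ gives $\rho(A^{(n)}_{j,j}) = \alpha = \rho(H^{(n)}_j)$. For the inductive step, assume $\rho(A^{(t+1)}_{\ell,\ell}) = \rho(H^{(t+1)}_\ell)$ for all $\ell$ in the range appropriate to $t+1$. Fix $j$ with $n-t+1 \leq j \leq n$ and set $c_t := \rho(H^{(t)}_j)$. By construction, the $c_t$ columns of $H^{(t)}_j$ are linearly independent and lie in $\mathcal{S}(H^{(t+1)}_j)$, so they can be extended to a basis $v_1,\ldots,v_{c_t},w_1,\ldots,w_{c_{t+1}-c_t}$ of $\mathcal{S}(H^{(t+1)}_j)$. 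Projecting onto the $j$-th block row, the $c_{t+1}$ projected vectors span $\mathcal{S}(A^{(t+1)}_{j,j})$, which by the inductive hypothesis has dimension $c_{t+1}$. Hence the projected vectors are linearly independent, and in particular so are the first $c_t$ of them, which are precisely the columns of $A^{(t)}_{j,j}$. This yields $\rho(A^{(t)}_{j,j}) = c_t = \rho(H^{(t)}_j)$.

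For part (b), I would apply $\dim(U+V) = \dim(U) + \dim(V) - \dim(U \cap V)$ with $U = \mathcal{S}(H^{(t+1)}|_{\{n-t,\ldots,j-1\}})$ and $V = \mathcal{S}(H^{(t+1)}_j)$. Since the columns of $H^{(t)}_j$ form a basis for $U \cap V$ by definition, the bracketed difference in \eqref{eq:proofgen_3} equals $\rho(H^{(t+1)}_j) - \rho(H^{(t)}_j)$; substituting part (a) at indices $t+1$ and $t$ converts this to $\rho(A^{(t+1)}_{j,j}) - \rho(A^{(t)}_{j,j})$, and rearranging gives \eqref{eq:proofgen_3}. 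For part (c), define $D_k := \dim\bigl(\sum_{\ell=n-t}^{k}\mathcal{S}(A^{(t+1)}_{j,\ell})\bigr)$ with base value $D_{n-t} = \rho(A^{(t+1)}_{j,n-t})$. The sum-dimension identity gives $D_k = D_{k-1} + \rho(A^{(t+1)}_{j,k}) - \dim\bigl(\mathcal{S}(A^{(t+1)}_{j,k}) \cap \sum_{\ell=n-t}^{k-1}\mathcal{S}(A^{(t+1)}_{j,\ell})\bigr)$, and for $k \geq n-t+1$ the inclusion \eqref{eq:proofgen_1} at $(j,k)$ bounds the intersection dimension below by $\rho(A^{(t)}_{j,k})$, so $D_k \leq D_{k-1} + \rho(A^{(t+1)}_{j,k}) - \rho(A^{(t)}_{j,k})$. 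Telescoping from $k=n-t$ to $k=j-1$ gives $D_{j-1} \leq \sum_{\ell=n-t}^{j-1}\rho(A^{(t+1)}_{j,\ell}) - \sum_{\ell=n-t+1}^{j-1}\rho(A^{(t)}_{j,\ell})$, and since \eqref{eq:proofgen_1} at $(j,j)$ forces $\rho(A^{(t)}_{j,j}) \leq D_{j-1}$, rearranging produces \eqref{eq:proofgen_4}.

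The main obstacle I expect is purely bookkeeping: the index range $n-t+1 \leq j \leq n$ shrinks as $t$ decreases, so the induction in (a) must invoke its hypothesis only at indices where $H^{(t+1)}_\ell$ is defined, and the telescoping in (c) must start at exactly $k = n-t$ rather than at $k = 1$. There is no genuinely new idea beyond Lemma \ref{lem:433_intersections}; once the indexing is set up correctly, the proofs are structurally identical to the $(4,3,3)$ case.
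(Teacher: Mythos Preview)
Your proposal is correct and follows essentially the same route as the paper's proof: downward induction on $t$ for part (a) via extending the columns of $H^{(t)}_j$ to a basis of $\mathcal{S}(H^{(t+1)}_j)$ and looking at the $j$-th block row, the dimension-sum identity for part (b), and an iterated peeling/telescoping argument using the inclusion \eqref{eq:proofgen_1} for part (c). The paper's argument is virtually identical, including the same bookkeeping shift from $\{1,\ldots,j-1\}$ to $\{n-t,\ldots,j-1\}$ that you anticipated.
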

\begin{proof}
See Appendix \ref{app:Prooflem}	
\end{proof}	

\vspace{0.1in}

\vspace{0.1in}
\subsection{Generalization of  Theorem \ref{thm:433_rankH4}}

\begin{thm} \label{thm:nkk_rankviainduction}
Consider the matrices $\{H^{(t)}, 4 \leq t \leq n \}$ as defined together by \eqref{eq:proofgen_Hndef} and \eqref{eq:proofgen_Htdef}. Also, consider the associated block sub-matrix representations given by  $H^{(t)}  = \left( A^{(t)}_{i,j} , 1 \leq i \leq n, n-t+1 \leq j \leq n \ \right ),  4 \leq t \leq n$. Then, for any $s$ such that $1 \leq s \leq n-3$, and any $t$ such that $3+s \leq t \leq n$, the rank of the matrix $H^{(t)}$ is lower bounded by
\begin{eqnarray}
\rho\left( H^{(t)} \right) & \geq & \frac{2}{(s+1)(s+2)}\left\{ (s+1)\sum_{j=n-t+1}^{n}\rho\left(A^{(t)}_{j,j}\right) - \sum_{j=n-t+2}^{n} \sum_{\ell=n-t+1}^{j-1}\rho\left(A^{(t)}_{j,\ell}\right)\right\}. \label{eq:boundgen_rankH}
\end{eqnarray}
\end{thm}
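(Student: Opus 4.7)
The plan is to prove Theorem \ref{thm:nkk_rankviainduction} by induction on $s$, in close analogy with the two-level comparison that gave Theorem \ref{thm:433_rankH4} for $(4,3,3)$ and the three-level comparison that gave \eqref{eq:proof544_f} for $(5,4,4)$. The base case $s=1$ asserts, for every $t$ with $4 \leq t \leq n$, the inequality $\rho(H^{(t)}) \geq \tfrac{1}{3}\bigl\{2\sum_{j=n-t+1}^{n}\rho(A^{(t)}_{j,j}) - \sum_{j=n-t+2}^{n}\sum_{\ell=n-t+1}^{j-1}\rho(A^{(t)}_{j,\ell})\bigr\}$. This is obtained by replaying the argument of Section \ref{sec:proof_thm_433} verbatim, with thick-column indices shifted from $\{1,\ldots,4\}$ to $\{n-t+1,\ldots,n\}$: write $\rho(H^{(t)})=\rho(A^{(t)}_{n-t+1,n-t+1})+\sum_{j=n-t+2}^{n}\delta_j$ and set $\delta_j=\bigl(\rho(A^{(t)}_{j,j})-\sum_{\ell=n-t+1}^{j-1}\rho(A^{(t)}_{j,\ell})\bigr)^+ +\alpha_j$ with $\alpha_j\geq 0$; bound $\rho(H^{(t-1)})$ below via the analogous $\gamma_j$-decomposition combined with parts (b) and (c) of Lemma \ref{lem:nkk_intersections}; finally impose $\rho(H^{(t)})\geq \rho(H^{(t-1)})$ and solve for $\sum_j \alpha_j$. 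The input to this argument is exactly the ingredients that Lemma \ref{lem:nkk_intersections} was designed to provide, so no new idea is required at the base.

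For the inductive step, assume the bound holds for some $s\geq 1$ and all $t'\geq s+3$, and fix $t\geq s+4$. Apply the inductive hypothesis to $H^{(t-1)}$ (permissible because $t-1\geq s+3$); this yields a lower bound on $\rho(H^{(t-1)})$ written in terms of $\{\rho(A^{(t-1)}_{j,j})\}_{j\geq n-t+2}$ and $\{\rho(A^{(t-1)}_{j,\ell})\}$. Rewrite this bound in terms of the level-$t$ data by substituting \eqref{eq:proofgen_3}, which produces $\rho(A^{(t-1)}_{j,j})=\rho(A^{(t)}_{j,j})-\delta_j=\rho(A^{(t)}_{j,j})-\bigl(\rho(A^{(t)}_{j,j})-\sum_{\ell=n-t+1}^{j-1}\rho(A^{(t)}_{j,\ell})\bigr)^+ -\alpha_j$, and by applying the inequality \eqref{eq:proofgen_4} to upper bound the off-diagonal sums $\sum_\ell \rho(A^{(t-1)}_{j,\ell})$ (which appear with a minus sign in the inductive hypothesis, so the inequality is used in the correct direction). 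On the other hand, $\rho(H^{(t)})=\rho(A^{(t)}_{n-t+1,n-t+1})+\sum_{j=n-t+2}^{n}\bigl[\bigl(\rho(A^{(t)}_{j,j})-\sum_{\ell}\rho(A^{(t)}_{j,\ell})\bigr)^+ +\alpha_j\bigr]$. Enforcing $\rho(H^{(t)})\geq \rho(H^{(t-1)})$ gives an inequality that, after isolating $\sum_j \alpha_j$ and substituting this back into the expression for $\rho(H^{(t)})$, produces an inequality of the form $\rho(H^{(t)})\geq \lambda\bigl\{(s+2)\sum_{j}\rho(A^{(t)}_{j,j}) - \sum_j\sum_\ell \rho(A^{(t)}_{j,\ell})\bigr\}$, with the constant $\lambda$ determined by the recursion.

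The main obstacle is verifying that the recursion delivers exactly $\lambda=\tfrac{2}{(s+2)(s+3)}$, i.e., that the algebraic identity $\tfrac{2}{(s+1)(s+2)}\cdot\tfrac{s+1}{s+3}=\tfrac{2}{(s+2)(s+3)}$ matches the weights produced when one simultaneously tracks the coefficients of the diagonal terms $\rho(A^{(t)}_{j,j})$ and of the off-diagonal terms $\rho(A^{(t)}_{j,\ell})$ through the substitution. I would handle this by computing these two coefficients separately (the positive-part terms can be kept symbolically throughout, since they are non-negative and cancel cleanly against the corresponding contributions in $\rho(H^{(t)})$), and by a boundary check that the summation ranges match: the inductive hypothesis at level $s$ on $H^{(t-1)}$ has its diagonal sum starting at $j=n-t+2$, while $\rho(H^{(t)})$ has its diagonal sum starting at $j=n-t+1$, and the ``missing'' term $\rho(A^{(t)}_{n-t+1,n-t+1})$ precisely absorbs the difference needed to turn $(s+1)$ into $(s+2)$ in the numerator. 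Once this coefficient identity is verified for a single step, the induction closes, and specializing to $t=n$ and using $\rho(A^{(n)}_{j,j})=\alpha$, $\rho(A^{(n)}_{i,j})\leq \beta$ for $i\neq j$ will recover \eqref{eq:general_bound_to_prove} with $r=s+1$, as needed for Theorem \ref{thm:rankH_new_bound_k_eq_d}.
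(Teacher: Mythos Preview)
Your proposal is correct and follows essentially the same route as the paper's own proof in Appendix~\ref{app:ProofThm}: induction on $s$, with the base case $s=1$ obtained by replaying the $(4,3,3)$ comparison with shifted indices, and the inductive step obtained by applying the hypothesis to $H^{(t-1)}$, lifting to level~$t$ via parts (b) and (c) of Lemma~\ref{lem:nkk_intersections}, and then enforcing $\rho(H^{(t)})\geq\rho(H^{(t-1)})$ to extract a bound on $\sum_j\alpha_j$. The paper, like you, leaves the final coefficient-matching algebra as a verification rather than spelling it out line by line.
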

\begin{proof}
A proof based on induction on the parameter $s$ appears in Appendix \ref{app:ProofThm}. The induction starts at $s=1$ and ends at $s = n-3$. 
\end{proof}	

\vspace{0.1in}

Observe that if we set $n=4$ in Theorem \ref{thm:nkk_rankviainduction}, there is only one $(s, t)$ pair for which we get a bound via the theorem. The pair is given by $(s = 1, t = 4)$, and in this case, the bound in \eqref{eq:boundgen_rankH} is exactly  same as what we have in \eqref{eq:433_proof_16r}. For the case when $n=5$, we get bounds for three pairs of $(s, t)$ given by $(s=1, t=5)$, $(s=1, t = 4)$ and $(s = 2, t = 5)$. The bounds obtained from Theorem \ref{thm:nkk_rankviainduction} for these three cases are given by \eqref{eq:proof544_rankH5}, \eqref{eq:proof544_d} and \eqref{eq:proof544_f}, respectively.

\subsection{Proof of Theorem \ref{thm:rankH_new_bound_k_eq_d}}

We will now give a proof of \eqref{eq:general_bound_to_prove} based on Theorem \ref{thm:nkk_rankviainduction}. Recall from our earlier discussion that proving \eqref{eq:general_bound_to_prove} is sufficient to prove Theorem \ref{thm:rankH_new_bound_k_eq_d}. 
For proving \eqref{eq:general_bound_to_prove}, we evaluate the bound in \eqref{eq:boundgen_rankH} for the $(n-3)$ pairs given by $(s, t=n), 1 \leq s \leq n-3$. Further, we also invoke the facts that $\rho\left(A^{(n)}_{j,j}\right) = \alpha, 1 \leq j \leq n$ and $\rho\left(A^{(n)}_{i,j}\right) \leq \beta, 1 \leq i, j \leq n, i \neq j$. Thus, we get that 
\begin{eqnarray}
\rho\left( H^{(n)} \right) & \geq & \frac{2}{(s+1)(s+2)}\left\{ (s+1)\sum_{j=1}^{n}\rho\left(A^{(n)}_{j,j}\right) - \sum_{j=2}^{n} \sum_{\ell=1}^{j-1}\rho\left(A^{(n)}_{j,\ell}\right)\right\} \\
& \geq & \frac{2}{(s+1)(s+2)}\left\{ (s+1)\sum_{j=1}^{n}\alpha - \sum_{j=2}^{n} (j-1)\beta\right\} \\
& = & \frac{2(s+1)n\alpha - n(n-1)\beta}{(s+1)(s+2)}, \ 1 \leq s \leq n-3. \label{eq:proofgenz}
\end{eqnarray}
Finally, note that \eqref{eq:general_bound_to_prove} follows from \eqref{eq:proofgenz} by substituting $r = s+1$. This completes the proof of \eqref{eq:general_bound_to_prove} and thereby, also the proof of Theorem \ref{thm:rankH_new_bound_k_eq_d}. 

\bibliographystyle{IEEEtran}
\bibliography{regen_codes_tradeoff}

\appendices

\section{Proof of Lemma \ref{lem:nkk_intersections}} \label{app:Prooflem}

The proofs of \eqref{eq:proofgen_2}, \eqref{eq:proofgen_3} and \eqref{eq:proofgen_4} are very similar to the proofs of 
\eqref{eq:433_proof_0}, \eqref{eq:433_proof_8} and \eqref{eq:433_proof_10}, respectively. We will still give a full proof of Lemma \ref{lem:nkk_intersections} here for sake of completeness.

\subsection{Proof of a) : }
We will give a proof of \eqref{eq:proofgen_2} based on an induction argument on the parameter $t$, starting at $t=n$ and decrementing $t$ by $1$ at each step. As for the induction start, the fact that $\rho(H^{(n)}_j) = \rho(A^{(n)}_{j,j})$ follows because we had defined  $H^{(n)} = H_{repair}$, and from \eqref{eq:Hrepair} it is clear that
\begin{eqnarray}
\rho\left(H^{(n)}_j\right) & = & \rho\left( A^{(n)}_{j,j}\right)  \ = \ \alpha, \ 1 \leq j \leq n.
\end{eqnarray}
Next, assume that
\begin{eqnarray}
\rho\left(H^{(t+1)}_j\right) & = & \rho\left( A^{(t+1)}_{j,j}\right), \ n-t \leq j \leq n \label{eq:prooflemm_app_1}
\end{eqnarray}
for some $t$ in the range $3 \leq t \leq n-1$. We will now prove that
\begin{eqnarray}
\rho\left(H^{(t)}_j\right) & = & \rho\left( A^{(t)}_{j,j}\right), \ n-t+1 \leq j \leq n. \label{eq:prooflemm_app_2}
\end{eqnarray}
Towards this, recall from the definition of $H^{(t)}_j$ that the columns of $H^{(t)}_j$ form a basis for the vector space
$\mathcal{S}\left(H^{(t+1)}_j\right) \cap \mathcal{S}\left(H^{(t+1)}|_{\{n-t, n-t+1, \ldots, j-1\}}\right)$.  Now, first of all note that the $\rho(H^{(t)}_j)$ columns of $H^{(t)}_j$ can be extended to a basis for $\mathcal{S}\left(H^{(t+1)}_j\right)$ by adding exactly $\rho\left(H^{(t+1)}_j\right) - \rho\left(H^{(t)}_j\right)$ additional columns. If we now focus on the sub-matrix $A^{(t)}_{j,j}$, it then follows that the $\rho(H^{(t)}_j)$ columns of $A^{(t)}_{j,j}$ can be extended to a basis for $\mathcal{S}\left(A^{(t+1)}_{j,j}\right)$ by adding at most $\rho\left(H^{(t+1)}_j\right) - \rho\left(H^{(t)}_j\right)$ additional columns. From the induction hypothesis in \eqref{eq:prooflemm_app_1}, we know that $\rho\left(H^{(t+1)}_j\right)  =  \rho\left( A^{(t+1)}_{j,j}\right)$, and hence it must also be true that $\rho\left(H^{(t)}_j\right) =  \rho\left( A^{(t)}_{j,j}\right), \ n-t+1 \leq j \leq n$.

\vspace{0.1in}

\subsection{Proof of b) : }

For the proof of \eqref{eq:proofgen_3} appearing in Lemma \ref{lem:nkk_intersections}, observe the following chain of equalities:
\begin{eqnarray}
\rho\left(H^{(t+1)}|_{\{n-t, \ldots, j\}}\right) - \rho\left(H^{(t+1)}|_{\{n-t, \ldots, j-1\}}\right) &  & \\
& \hspace{-4in} = & \hspace{-2in} \text{dim}\left(\mathcal{S}\left(H^{(t+1)}|_{\{n-t, \ldots, j-1\}} \right) + \mathcal{S}\left(H^{(t+1)}_j \right)\right) -  \text{dim}\left(\mathcal{S}\left(H^{(t+1)}|_{\{n-t, \ldots, j-1\}}\right)\right) \label{eq:nkk_proof_11} \\ 
& \hspace{-4in} = & \hspace{-2in} \text{dim}\left(\mathcal{S}\left(H^{(t+1)}_j \right)\right) - \text{dim}\left(\mathcal{S}\left(H^{(t+1)}|_{\{n-t, \ldots, j-1\}} \right) \cap \mathcal{S}\left(H^{(t+1)}_j \right)\right) \label{eq:nkk_proof_12} \\
& \hspace{-4in} = & \hspace{-2in} \rho\left(H^{(t+1)}_j \right) - \rho\left(H^{(t)}_j \right) \label{eq:nkk_proof_13} \\
& \hspace{-4in} = & \hspace{-2in} \rho\left(A^{(t+1)}_{j,j} \right) - \rho\left(A^{(t)}_{j,j} \right), \label{eq:nkk_proof_14}
\end{eqnarray} 
where \eqref{eq:nkk_proof_12} follows from the fact that for any two subspaces $W_1, W_2$, $\text{dim}(W_1 + W_2)  = \text{dim}(W_1) + \text{dim}(W_2) - \text{dim}(W_1 \cap W_2)$. Equation  \eqref{eq:nkk_proof_13} follows from the definition of $H^{(t)}_j$, while \eqref{eq:nkk_proof_14} follows from Part a) of Lemma \ref{lem:nkk_intersections}. This completes the proof of \eqref{eq:proofgen_3}. 

\vspace{0.1in}

\subsection{Proof of c) : }
We need to prove that 
\begin{eqnarray}
\sum_{\ell=n-t+1}^{j-1} \rho\left(A^{(t)}_{j,\ell} \right) & \leq & \sum_{\ell=n-t}^{j-1}\rho\left(A^{(t+1)}_{j,\ell} \right) \ - \ \rho\left(A^{(t)}_{j,j} \right), \ 3 \leq t \leq n-1, \ n-t+2 \leq j \leq n. \label{eq:proofgen_4a}
\end{eqnarray}

Towards this, observe from \eqref{eq:proofgen_1} that we have $\mathcal{S}\left(A^{(t)}_{j,j}\right) \subseteq  \sum_{\ell = n-t}^{j-1}\mathcal{S}\left(A^{(t+1)}_{j,\ell}\right)$ and thus, we get that
\begin{eqnarray}
\rho \left(A^{(t)}_{j,j}\right) & \leq &  \text{dim} \left( \sum_{\ell = n-t}^{j-1}\mathcal{S}\left(A^{(t+1)}_{j,\ell}\right) \right), \ 3 \leq t \leq n-1, \ n-t+2 \leq j \leq n. \label{eq:433_proof_15aa}
\end{eqnarray}
The right hand side of \eqref{eq:433_proof_15aa} can be upper bounded as follows:
\begin{eqnarray}
\text{dim} \left( \sum_{\ell = n-t}^{j-1}\mathcal{S}\left(A^{(t+1)}_{j,\ell}\right) \right) & = & \text{dim} \left( \sum_{\ell = n-t}^{j-2}\mathcal{S}\left(A^{(t+1)}_{j,\ell}\right)  + \mathcal{S}\left(A^{(t+1)}_{j,j-1}\right)\right) \label{eq:433_proof_16aa} \\
& \hspace{-2in} = & \hspace{-1in} \text{dim} \left( \sum_{\ell = n-t}^{j-2}\mathcal{S}\left(A^{(t+1)}_{j,\ell}\right) \right) + \text{dim} \left(  \mathcal{S}\left(A^{(t+1)}_{j,j-1}\right)\right) - \text{dim} \left( \sum_{\ell = n-t}^{j-2}\mathcal{S}\left(A^{(t+1)}_{j,\ell}\right)  \cap \mathcal{S}\left(A^{(t+1}_{j,j-1}\right)\right) \label{eq:433_proof_17aa} \\
& \hspace{-2in} \leq & \hspace{-1in} \text{dim} \left( \sum_{\ell = n-t}^{j-2}\mathcal{S}\left(A^{(t+1)}_{j,\ell}\right) \right) + \text{dim} \left(  \mathcal{S}\left(A^{(t+1)}_{j,j-1}\right)\right) - \text{dim} \left( \mathcal{S}\left(A^{(t)}_{j,j-1}\right)\right), \label{eq:433_proof_18aa} \\
& \hspace{-2in} = & \hspace{-1in} \text{dim} \left( \sum_{\ell = n-t}^{j-2}\mathcal{S}\left(A^{(t+1)}_{j,\ell}\right) \right) + \rho\left(A^{(t+1)}_{j,j-1}\right) - \rho\left(A^{(t)}_{j,j-1}\right), \label{eq:433_proof_19aa}
\end{eqnarray}
where \eqref{eq:433_proof_18aa} follows from \eqref{eq:proofgen_1}. The term $\text{dim} \left( \sum_{\ell = n-t}^{j-2}\mathcal{S}\left(A^{(t+1)}_{j,\ell}\right) \right)$ appearing in \eqref{eq:433_proof_19aa} can be further upper bounded  by following a similar sequence of steps as in \eqref{eq:433_proof_16aa} - \eqref{eq:433_proof_19aa}. Combining with \eqref{eq:433_proof_15aa}, we eventually get that 
\begin{eqnarray}
\rho \left(A^{(t)}_{j,j}\right) & \leq &   \sum_{\ell=n-t}^{j-1}\rho\left(A^{(t+1)}_{j,\ell} \right) \ - \ \sum_{\ell=n-t+1}^{j-1} \rho\left(A^{(t)}_{j,\ell} \right), \ 3 \leq t \leq n-1, \ n-t+2 \leq j \leq n.  
\end{eqnarray}
This completes the proof of \eqref{eq:proofgen_4a}.

\section{Proof of Theorem \ref{thm:nkk_rankviainduction}} \label{app:ProofThm}

We need to prove that for any $s$ such that $1 \leq s \leq n-3$ and any $t$ such that $3+s \leq t \leq n$, the rank of the   matrix $H^{(t)}$ is lower bounded by
\begin{eqnarray}
\rho\left( H^{(t)} \right) & \geq & \frac{2}{(s+1)(s+2)}\left\{ (s+1)\sum_{j=n-t+1}^{n}\rho\left(A^{(t)}_{j,j}\right) - \sum_{j=n-t+2}^{n} \sum_{\ell=n-t+1}^{j-1}\rho\left(A^{(t)}_{j,\ell}\right)\right\}. \label{eq:boundgen_rankH_repeat}
\end{eqnarray}
The proof will be based on an induction argument on the parameter $s$, starting at $s = 1$ and incrementing $s$ by $1$ at each step.

\vspace{0.1in}

\subsection{Induction Start : The case $s=1$}

We need to prove that for any $t$ such that $4 \leq t \leq n$, the rank of the matrix $H^{(t)}$ is lower bounded by
\begin{eqnarray}
\rho\left( H^{(t)} \right) & \geq & \frac{1}{3}\left\{ 2\sum_{j=n-t+1}^{n}\rho\left(A^{(t)}_{j,j}\right) - \sum_{j=n-t+2}^{n} \sum_{\ell=n-t+1}^{j-1}\rho\left(A^{(t)}_{j,\ell}\right)\right\}. \label{eq:boundgen_rankH_indstart}
\end{eqnarray}
The proof of \eqref{eq:boundgen_rankH_indstart} is very similar to the proof of Theorem \ref{thm:433_rankH4}, where derived the lower bound on $\rho(H^{(4)})$ for the case of $(4, 3, 3)$. Thus, we will first separately calculate (or bound) the column ranks of the matrices $H^{(t)}$ and $H^{(t-1)}$ (for a general $t, 4 \leq t \leq n$) and then show that  \eqref{eq:boundgen_rankH_indstart} is a necessary condition for the satisfying the relation
\begin{eqnarray}
\rho(H^{(t)}) & \geq & \rho(H^{(t-1)}), \ 4 \leq t \leq n. \label{eq:appthmproof_1}
\end{eqnarray}
Recall from \eqref{eq:proofgen_rankorderHs} that the ranks of the matrices $\{\rho(H^{(t)})\}$ are ordered as given in \eqref{eq:appthmproof_1}.

\vspace{0.1in}
\subsubsection{Rank of $H^{(t)}, 4 \leq t \leq n$} Let us define the quantities $\delta_j, n-t+1 \leq j \leq n$ as follows: 
\begin{eqnarray}
\delta_{n-t+1} & = & \rho\left(H^{(t)}_{n-t+1}\right) \ \stackrel{(a)}{=} \ \rho\left(A^{(t)}_{n-t+1,n-t+1}\right), \label{eq:nkk_proof_rankHt_1}\\
\delta_j & = & \rho\left(H^{(t)}|_{\{n-t+1, \ldots, j \}}\right) - \rho\left(H^{(t)}|_{\{n-t+1, \ldots, j-1 \}}\right), \ n-t+2 \leq j \leq n, \label{eq:nkk_proof_rankHt_2}
\end{eqnarray}
where the equality in $(a)$ follows from Part a) of Lemma \ref{lem:nkk_intersections}. It is straightforward to see that  $\delta_j, n-t+2 \leq j \leq n$ is lower bounded by
\begin{eqnarray}
\delta_j & \geq & \left(\rho\left(A_{j,j}^{(t)}\right) - \sum_{\ell=n-t+1}^{j-1}\rho\left(A_{j,\ell}^{(t)}\right)\right)^+, \ n-t+2 \leq j \leq n. \label{eq:nkk_proof_rankHt_3}
\end{eqnarray}
Thus, let us assume that 
\begin{eqnarray}
\delta_j & = & \left(\rho\left(A_{j,j}^{(t)}\right) - \sum_{\ell=n-t+1}^{j-1}\rho\left(A_{j,\ell}^{(t)}\right)\right)^+  +   \alpha_j, \ n-t+2 \leq j \leq n \label{eq:nkk_proof_rankHt_4},
\end{eqnarray}
where $\{\alpha_j, n-t+2 \leq j \leq n\}$ are non-negative integers. The column rank of the matrix $H^{(t)}$ can now be written as
\begin{eqnarray}
\rho\left( H^{(t)} \right) & = & \sum_{j=n-t+1}^{n}\delta_j \\
& = & \rho\left(A^{(t)}_{n-t+1,n-t+1}\right) \ + \  \sum_{j=n-t+2}^{n} \left\{ \left(\rho\left(A_{j,j}^{(t)}\right) - \sum_{\ell=n-t+1}^{j-1}\rho\left(A_{j,\ell}^{(t)}\right)\right)^+  +   \alpha_j \right\} \label{eq:nkk_proof_rankHt_5}.
\end{eqnarray}

\vspace{0.1in}
\subsubsection{Rank of $H^{(t-1)}, 4 \leq t \leq n$} The goal here is to get a lower bound on the rank of $H^{(t-1)}$, where the lower bound depends only on the quantities $\{A^{(t)}_{i, j}\}$ (and not $\{A^{(t-1)}_{i, j}\}$).  The initial steps of this rank calculation here resemble those for the rank calculation of $H^{(t)}$. Thus, let us define the quantities $\gamma_j, n-t+2 \leq j \leq n$ such that
\begin{eqnarray}
\gamma_{n-t+2} & = & \rho\left(H^{(t-1)}_{n-t+2}\right), \\
\gamma_j & = & \rho\left(H^{(t-1)}|_{\{n-t+2, \ldots, j \}}\right) - \rho\left(H^{(t-1)}|_{ \{n-t+2, \ldots, j-1 \} }\right), \ n-t+3 \leq j \leq n. \label{eq:nkk_proof_rankHtp1_1}
\end{eqnarray}
The quantity $\gamma_{n-t+2}$ can be written as 
\begin{eqnarray}
\gamma_{n-t+2} & = & \rho\left(H^{(t-1)}_{n-t+2}\right) \\
& = & \rho\left(A^{(t-1)}_{n-t+2, \ n-t+2}\right) \label{eq:nkk_proof_rankHtp1_2} \\
& = & \rho\left(A^{(t)}_{n-t+2,n-t+2} \right) \ - \ \left\{ \rho\left(H^{(t)}|_{\{n-t+1,\  n-t+2\}}\right) - \rho\left(H^{(t)}_{n-t+1}\right) \right\} \label{eq:nkk_proof_rankHtp1_3} \\
& = & \rho\left(A^{(t)}_{n-t+2,n-t+2} \right) \ - \ \delta_{n-t+2} \label{eq:nkk_proof_rankHtp1_3a} \\
& = &  \rho\left(A^{(t)}_{n-t+2,n-t+2} \right)   \ - \ \left(\rho\left(A_{n-t+2,n-t+2}^{(t)}\right) - \rho\left(A_{n-t+2,n-t+1}^{(t)}\right)\right)^+  -   \alpha_{n-t+2}, \label{eq:nkk_proof_rankHtp1_3az}
\end{eqnarray}
where \eqref{eq:nkk_proof_rankHtp1_2} and \eqref{eq:nkk_proof_rankHtp1_3} respectively follow from Parts a) and b) of Lemma \ref{lem:nkk_intersections}. Also, notice that \eqref{eq:nkk_proof_rankHtp1_3a} and \eqref{eq:nkk_proof_rankHtp1_3az} follow from  \eqref{eq:nkk_proof_rankHt_2} and \eqref{eq:nkk_proof_rankHt_4} respectively. Next, we note that the quantities $\gamma_j, n-t+3 \leq j \leq n $ can be lower bounded in the same way we lower bounded $\delta_j, n-t+2 \leq j \leq n$ in \eqref{eq:nkk_proof_rankHt_3}. Thus, we get that
\begin{eqnarray}
\gamma_j & \geq & \left(\rho\left(A_{j,j}^{(t-1)}\right) - \sum_{\ell=n-t+2}^{j-1}\rho\left(A_{j,\ell}^{(t-1)}\right)\right)^+, \ \ n-t+3 \leq j \leq n. \label{eq:nkk_proof_rankHtp1_3b}
\end{eqnarray}
From Part c) of Lemma \ref{lem:nkk_intersections}, we know that (replace $t$ by $t-1$ in part c) of Lemma \ref{lem:nkk_intersections})
\begin{eqnarray}
\sum_{\ell=n-t+2}^{j-1} \rho\left(A^{(t-1)}_{j,\ell} \right) & \leq & \sum_{\ell=n-t+1}^{j-1}\rho\left(A^{(t)}_{j,\ell} \right) \ - \ \rho\left(A^{(t-1)}_{j,j} \right),  \ n-t+3 \leq j \leq n. \label{eq:nkk_proof_rankHtp1_3c}
\end{eqnarray}
Using \eqref{eq:nkk_proof_rankHtp1_3c} in \eqref{eq:nkk_proof_rankHtp1_3b}, we get that
\begin{eqnarray}
\gamma_j & \geq & \left(2\rho\left(A_{j,j}^{(t-1)}\right) - \sum_{\ell=n-t+1}^{j-1}\rho\left(A_{j,\ell}^{(t)}\right)\right)^+ \label{eq:nkk_proof_rankHtp1_4}\\
& \geq & 2\rho\left(A_{j,j}^{(t-1)}\right) - \sum_{\ell=n-t+1}^{j-1}\rho\left(A_{j,\ell}^{(t)}\right) \label{eq:nkk_proof_rankHtp1_4noplus}\\
& = & 2\left[\rho\left(A^{(t)}_{j,j} \right) \ - \ \left\{ \rho\left(H^{(t)}|_{\{n-t+1, \ldots, j\}}\right) - \rho\left(H^{(t)}|_{\{n-t+1, \ldots, j-1\}}\right) \right\} \right] - \sum_{\ell=n-t+1}^{j-1}\rho\left(A_{j,\ell}^{(t)}\right) \label{eq:nkk_proof_rankHtp1_5} \\
& = & 2\left[\rho\left(A^{(t)}_{j,j} \right) \ - \ \delta_j \right] - \sum_{\ell=n-t+1}^{j-1}\rho\left(A_{j,\ell}^{(t)}\right) \label{eq:nkk_proof_rankHtp1_6} \\
& = & 2\left[\rho\left(A^{(t)}_{j,j} \right) \ - \ \left\{ \left(\rho\left(A_{j,j}^{(t)}\right) - \sum_{\ell=n-t+1}^{j-1}\rho\left(A_{j,\ell}^{(t)}\right)\right)^+  +   \alpha_j \right\} \right] - \sum_{\ell=n-t+1}^{j-1}\rho\left(A_{j,\ell}^{(t)}\right), \ n-t+3 \leq j \leq n, \label{eq:nkk_proof_rankHtp1_7} 
\end{eqnarray}
where \eqref{eq:nkk_proof_rankHtp1_5} follows from part b) of Lemma \ref{lem:nkk_intersections}, while \eqref{eq:nkk_proof_rankHtp1_6} and \eqref{eq:nkk_proof_rankHtp1_7} follow from \eqref{eq:nkk_proof_rankHt_2} and \eqref{eq:nkk_proof_rankHt_4} respectively. The rank of the matrix $H^{(t-1)}$ is now given by $\rho\left(H^{(t-1)}\right) = \sum_{j=n-t+2}^{n}\gamma_j$, where $\gamma_{n-t+2}$ is given by \eqref{eq:nkk_proof_rankHtp1_3az}, and $\{\gamma_j, n-t+3 \leq j \leq n\}$ are lower bounded as given by \eqref{eq:nkk_proof_rankHtp1_7}. We thus get a lower bound on the rank of $H^{(t-1)}$, where the lower bound depends only on the quantities $\{A^{(t)}_{i, j}\}$ and $\{\alpha_j\}$.

\vspace{0.1in}

\subsubsection{Comparison of ranks of $H^{(t)}$ and $H^{(t-1)}$} We are now in a position to show that the bound in  \eqref{eq:boundgen_rankH_indstart} is a necessary condition for satisfying the relation $\rho(H^{(t)}) \geq \rho(H^{(t-1)}), \ 4 \leq t \leq n$. Recall from \eqref{eq:nkk_proof_rankHt_5} that the rank of the matrix $H^{(t)}$ is given by
\begin{eqnarray}
\rho\left( H^{(t)} \right) & = & \rho\left(A^{(t)}_{n-t+1,n-t+1}\right) \ + \  \sum_{j=n-t+2}^{n} \left\{ \left(\rho\left(A_{j,j}^{(t)}\right) - \sum_{\ell=n-t+1}^{j-1}\rho\left(A_{j,\ell}^{(t)}\right)\right)^+  +   \alpha_j \right\}, \ 4 \leq t \leq n. \label{eq:nkk_proof_rankHt_5rep}
\end{eqnarray}
The goal is to obtain a lower bound on $\sum_{j=n-t+2}^{n}\alpha_j$ by invoking the comparison $\rho\left( H^{(t)} \right) \geq \rho\left( H^{(t-1)} \right)$, and then use this lower bound back in \eqref{eq:nkk_proof_rankHt_5rep} to get the desired lower bound on $\rho\left( H^{(t)} \right)$. One can verify that the comparison yields the following lower bound on $\sum_{j=n-t+2}^{n}\alpha_j$:
\begin{eqnarray}
\sum_{j=n-t+2}^{n}\alpha_j & \geq & \frac{1}{3}\left\{ -\rho\left(A^{(t)}_{n-t+1,n-t+1}\right) + \rho\left(A^{(t)}_{n-t+2,n-t+2}\right)  +
2\sum_{j=n-t+3}^{n}\rho\left(A^{(t)}_{j,j}\right) - \right. \nonumber \\ 
& & \hspace{-1in} \left. \left[2\left( \rho\left(A^{(t)}_{n-t+2,n-t+2}\right) - \rho\left(A^{(t)}_{n-t+2,n-t+1}\right)\right)^+     + 3\sum_{j=n-t+3}^{n}\left( \rho\left(A^{(t)}_{j,j}\right) - \sum_{\ell=n-t+1}^{j-1}\rho\left(A^{(t)}_{j,\ell}\right)\right)^+ + \sum_{j=n-t+3}^{n} \sum_{\ell=n-t+1}^{j-1}\rho\left(A^{(t)}_{j,\ell}\right)\right]
\right\} \nonumber \\ \label{eq:nkk_alphasum_indstart}
\end{eqnarray}
It can be verified that substituting  \eqref{eq:nkk_alphasum_indstart} back in \eqref{eq:nkk_proof_rankHt_5rep} results in the following lower bound on rank of $H^{(t)}$:
\begin{eqnarray}
\rho\left( H^{(t)} \right) & \geq & \frac{1}{3}\left\{ 2\sum_{j=n-t+1}^{n}\rho\left(A^{(t)}_{j,j}\right) - \sum_{j=n-t+2}^{n} \sum_{\ell=n-t+1}^{j-1}\rho\left(A^{(t)}_{j,\ell}\right)\right\}.
\end{eqnarray}
This completes our proof of the first step of the induction.

\vspace{0.1in}
\subsection{Induction Step : From $s$ to $s+1$}

Let us assume that for some $s$ in the range $1 \leq s \leq n-4$ and for any $t$ such that $3+s \leq t \leq n$, the rank of the matrix $H^{(t)}$ is lower bounded by
\begin{eqnarray}
\rho\left( H^{(t)} \right) & \geq & \frac{2}{(s+1)(s+2)}\left\{ (s+1)\sum_{j=n-t+1}^{n}\rho\left(A^{(t)}_{j,j}\right) - \sum_{j=n-t+2}^{n} \sum_{\ell=n-t+1}^{j-1}\rho\left(A^{(t)}_{j,\ell}\right)\right\}. \label{eq:boundgen_rankH_inductionhyp}
\end{eqnarray}
We then need to prove that for any $t$ such that $3+(s+1) \leq t \leq n$,  rank of $H^{(t)}$ is lower bounded by
\begin{eqnarray}
\rho\left( H^{(t)} \right) & \geq & \frac{2}{(s+2)(s+3)}\left\{ (s+2)\sum_{j=n-t+1}^{n}\rho\left(A^{(t)}_{j,j}\right) - \sum_{j=n-t+2}^{n} \sum_{\ell=n-t+1}^{j-1}\rho\left(A^{(t)}_{j,\ell}\right)\right\}. \label{eq:boundgen_rankH_inductionstep}
\end{eqnarray}
Towards this first of all note that $3 + (s+1) \leq t \implies 3 + s \leq t-1$, and hence we can apply the induction hypothesis to any pair $(s, t-1), \ 3 + (s+1) \leq t \leq n$. Thus, using \eqref{eq:boundgen_rankH_inductionhyp}, the rank of the matrix $H^{(t-1)}$ can be lower bounded as\
\begin{eqnarray}
\rho\left( H^{(t-1)} \right) & \geq & \frac{2}{(s+1)(s+2)}\left\{ (s+1)\sum_{j=n-t+2}^{n}\rho\left(A^{(t-1)}_{j,j}\right) - \sum_{j=n-t+3}^{n} \sum_{\ell=n-t+2}^{j-1}\rho\left(A^{(t-1)}_{j,\ell}\right)\right\}. \label{eq:inductionhyp_tm1}
\end{eqnarray}
In the following sequence of steps, we consider \eqref{eq:inductionhyp_tm1} and we will express (or bound) summations involving $\{A^{(t-1)}_{i,j}\}$ with summations which only involve $\{A^{(t)}_{i,j}\}$. In this way, we will get a lower bound on the rank of $H^{(t-1)}$ in terms of $\{A^{(t)}_{i,j}\}$. We will then invoke the comparison $\rho(H^{(t)}) \geq \rho(H^{(t-1)})$, where $\rho(H^{(t)})$ is as before, given by \eqref{eq:nkk_proof_rankHt_5}. The bound in \eqref{eq:boundgen_rankH_inductionstep} will turn out to be a necessary condition for satisfying $\rho(H^{(t)}) \geq \rho(H^{(t-1)})$. At this point, it may be noted that the difference between the case of induction start (with $s=1$) and the case for any general $s$ lies in the expression for $\rho(H^{(t-1)})$ that we use for performing comparison $\rho(H^{(t)}) \geq \rho(H^{(t-1)})$. The expression for $\rho(H^{(t)})$ is always as given by \eqref{eq:nkk_proof_rankHt_5}.

Recall from Part c) of Lemma \ref{lem:nkk_intersections} that we have (replace $t$ with $t-1$ in Part c) of Lemma \ref{lem:nkk_intersections})
\begin{eqnarray}
\sum_{\ell=n-t+2}^{j-1} \rho\left(A^{(t-1)}_{j,\ell} \right) & \leq & \sum_{\ell=n-t+1}^{j-1}\rho\left(A^{(t)}_{j,\ell} \right) \ - \ \rho\left(A^{(t-1)}_{j,j} \right),  \ n-t+3 \leq j \leq n. \label{eq:proof_indstep_1}
\end{eqnarray}
Using \eqref{eq:proof_indstep_1} in \eqref{eq:inductionhyp_tm1}, we get 
\begin{eqnarray}
\rho\left( H^{(t-1)} \right) & \geq & \frac{2}{(s+1)(s+2)}\left\{ (s+1)\sum_{j=n-t+2}^{n}\rho\left(A^{(t-1)}_{j,j}\right) - \sum_{j=n-t+3}^{n} \left( \sum_{\ell=n-t+1}^{j-1}\rho\left(A^{(t)}_{j,\ell} \right) \ - \ \rho\left(A^{(t-1)}_{j,j} \right)\right)\right\} \label{eq:proof_indstep_2} \\ 
& = & \frac{2}{(s+1)(s+2)}\left\{ (s+1)\rho\left(A^{(t-1)}_{n-t+2,n-t+2}\right) + (s+2)\sum_{j=n-t+3}^{n}A^{(t-1)}_{j,j} - \sum_{j=n-t+3}^{n} \sum_{\ell=n-t+1}^{j-1}\rho\left(A^{(t)}_{j,\ell} \right) \right\}. \nonumber \\ \label{eq:proof_indstep_3} 
\end{eqnarray}
For proceeding further, note from \eqref{eq:nkk_proof_rankHtp1_2}-\eqref{eq:nkk_proof_rankHtp1_3az} that we have
\begin{eqnarray} \label{eq:proof_indstep_4} 
\rho\left(A^{(t-1)}_{n-t+2,n-t+2}\right) & = &  \rho\left(A^{(t)}_{n-t+2,n-t+2} \right)   \ - \ \left(\rho\left(A_{n-t+2,n-t+2}^{(t)}\right) - \rho\left(A_{n-t+2,n-t+1}^{(t)}\right)\right)^+  -   \alpha_{n-t+2}.
\end{eqnarray}
Also, one can see from \eqref{eq:nkk_proof_rankHtp1_4noplus}-\eqref{eq:nkk_proof_rankHtp1_7} that the quantity $\rho\left(A^{(t-1)}_{j,j}\right), n-t+3 \leq j \leq n$ is given by 
\begin{eqnarray} \label{eq:proof_indstep_5} 
\rho\left(A^{(t-1)}_{j,j}\right) & = & \rho\left(A^{(t)}_{j,j} \right) \ - \ \left\{ \left(\rho\left(A_{j,j}^{(t)}\right) - \sum_{\ell=n-t+1}^{j-1}\rho\left(A_{j,\ell}^{(t)}\right)\right)^+  +   \alpha_j \right\}, \ n-t+3 \leq j \leq n. 
\end{eqnarray}
Substituting for $\rho\left(A^{(t-1)}_{n-t+2,n-t+2}\right)$ and $\rho\left(A^{(t-1)}_{j,j}\right), n-t+3 \leq j \leq n$ in \eqref{eq:proof_indstep_3}, using \eqref{eq:proof_indstep_4} and \eqref{eq:proof_indstep_5} respectively, we get 
\begin{eqnarray}
\rho\left( H^{(t-1)} \right) & \geq & \frac{2}{(s+1)(s+2)}\left\{ (s+1)\left[ \rho\left(A^{(t)}_{n-t+2,n-t+2} \right)   \ - \ \left(\rho\left(A_{n-t+2,n-t+2}^{(t)}\right) - \rho\left(A_{n-t+2,n-t+1}^{(t)}\right)\right)^+  -   \alpha_{n-t+2}\right] \right. \nonumber \\
& & \hspace{-1in} \left. + \ (s+2)\sum_{j=n-t+3}^{n}\left[ \rho\left(A^{(t)}_{j,j} \right) \ - \ \left\{ \left(\rho\left(A_{j,j}^{(t)}\right) - \sum_{\ell=n-t+1}^{j-1}\rho\left(A_{j,\ell}^{(t)}\right)\right)^+  +   \alpha_j \right\} \right] \ - \ \sum_{j=n-t+3}^{n} \sum_{\ell=n-t+1}^{j-1}\rho\left(A^{(t)}_{j,\ell} \right) \right\}, \label{eq:proof_indstep_6} 
\end{eqnarray}
and thus we have our bound on $\rho\left( H^{(t-1)} \right)$ in terms of $\{A_{i, j}^{t}\}$. 

We now perform the comparison $\rho\left(H^{(t)}\right) \geq \rho\left(H^{(t-1)}\right)$. Towards, this, recall from \eqref{eq:nkk_proof_rankHt_5} that the rank of the matrix $H^{(t)}$ is given by
\begin{eqnarray}
\rho\left( H^{(t)} \right) & = & \rho\left(A^{(t)}_{n-t+1,n-t+1}\right) \ + \  \sum_{j=n-t+2}^{n} \left\{ \left(\rho\left(A_{j,j}^{(t)}\right) - \sum_{\ell=n-t+1}^{j-1}\rho\left(A_{j,\ell}^{(t)}\right)\right)^+  +   \alpha_j \right\}, \ 4 \leq t \leq n. \label{eq:nkk_proof_rankHt_5rep2}
\end{eqnarray}
One can verify that the comparison $\rho\left( H^{(t)} \right) \geq \rho\left( H^{(t-1)} \right)$ using the above two equations yields the following lower bound on $\sum_{j=n-t+2}^{n}\alpha_j$:
\begin{eqnarray}
\sum_{j=n-t+2}^{n}\alpha_j & \geq & \frac{1}{(s+2)(s+3)}\left\{ -(s+1)(s+2)\rho\left(A^{(t)}_{n-t+1,n-t+1}\right) + 2(s+1)\rho\left(A^{(t)}_{n-t+2,n-t+2}\right)   \right. \nonumber\\
& & + \ 2(s+2)\sum_{j=n-t+3}^{n}\rho\left(A^{(t)}_{j,j}\right) - (s+1)(s+4)\left( \rho\left(A^{(t)}_{n-t+2,n-t+2}\right) - \rho\left(A^{(t)}_{n-t+2,n-t+1}\right)\right)^+   \nonumber\\
& & \left. - \ (s+2)(s+3)\sum_{j=n-t+3}^{n}\left( \rho\left(A^{(t)}_{j,j}\right) - \sum_{\ell=n-t+1}^{j-1}\rho\left(A^{(t)}_{j,\ell}\right)\right)^+ - 2\sum_{j=n-t+3}^{n} \sum_{\ell=n-t+1}^{j-1}\rho\left(A^{(t)}_{j,\ell}\right)\right\} \label{eq:nkk_alphasum_indstep}.
\end{eqnarray}
It can be checked that substitution of \eqref{eq:nkk_alphasum_indstep} in \eqref{eq:nkk_proof_rankHt_5rep2} yields the desired lower bound given by
\begin{eqnarray}
\rho\left( H^{(t)} \right) & \geq & \frac{2}{(s+2)(s+3)}\left\{ (s+2)\sum_{j=n-t+1}^{n}\rho\left(A^{(t)}_{j,j}\right) - \sum_{j=n-t+2}^{n} \sum_{\ell=n-t+1}^{j-1}\rho\left(A^{(t)}_{j,\ell}\right)\right\}, 4+s \leq t \leq n.
\end{eqnarray}
This completes the proof of Theorem \ref{thm:nkk_rankviainduction}.
\end{document}